\newtheorem{theorem}{Theorem}[section]
\newtheorem{lemma}{Lemma}[section]
\newtheorem{claim}{Claim}[section]
\newtheorem{proposition}{Proposition}[section]
\newtheorem{fact}{Fact}[section]
\newcommand{\cA}{{\cal A}}
\algnewcommand{\LineComment}[1]{\Statex #1}
\newcommand{\remove}[1]{}
\begin{document}

\baselineskip  0.2in 
\parskip     0.1in 
\parindent   0.0in 

\title{{\bf Tradeoffs Between Cost and Information\\ for Rendezvous and Treasure Hunt }}
\date{}
\newcommand{\inst}[1]{$^{#1}$}

\author{
Avery Miller\inst{1},
Andrzej Pelc\inst{1}$^,$\footnote{Partially supported by NSERC discovery grant and by the Research Chair in Distributed Computing at the Universit\'e du Qu\'{e}bec en Outaouais.}\\
\inst{1} Universit\'{e} du Qu\'{e}bec en Outaouais, Gatineau, Canada.\\
E-mails: \url{avery@averymiller.ca}, \url{pelc@uqo.ca}\\
}

\maketitle

\begin{abstract}

In rendezvous, two agents traverse network edges in synchronous rounds and have to meet at some node. In treasure hunt, a single agent has to find a stationary target situated at an unknown node of the network. We study tradeoffs between the amount of information ({\em advice}) available  {\em a priori} to the agents and the cost (number of edge traversals) of rendezvous and treasure hunt. Our goal is to find the smallest size of advice which enables the agents to solve these tasks at some cost $C$ in a network with $e$ edges. This size turns out to depend on the initial distance $D$ and on the ratio $\frac{e}{C}$, which is the {\em relative cost gain} due to advice. For arbitrary graphs, we give upper and lower bounds of $O(D\log(D\cdot \frac{e}{C}) +\log\log e)$ and $\Omega(D\log \frac{e}{C})$, respectively, on the optimal  size of advice. For the class of trees, we give nearly tight upper and lower bounds of $O(D\log \frac{e}{C} + \log\log e)$ and $\Omega (D\log \frac{e}{C})$, respectively.

\vspace{1ex}

\noindent {\bf Keywords:} rendezvous, treasure hunt, advice, deterministic algorithm, mobile agent, cost. 

\end{abstract}

\maketitle

\section{Introduction}

\subsection{Model and problems}

Rendezvous and treasure hunt are two basic tasks performed by mobile agents in networks. In rendezvous, two agents, initially located at distinct nodes of the network,
traverse network edges in synchronous rounds and
have to meet at some node. In treasure hunt, a single agent has to find a stationary target (called treasure) situated at an unknown node of the network. 
 The network might model a labyrinth or a system of corridors in a cave, in which case the agents might be mobile robots. The meeting of such robots might be motivated by
 the need to exchange previously collected samples, or to agree how to share a future cleaning or decontamination task. Treasure hunt might mean
  searching a cave for a resource or for a missing person after an accident.
 In other applications we can consider a computer network, in which the mobile entities are software agents.
The meeting of such agents might be necessary to exchange data or share a future task of checking the functionality of network components. Treasure hunt in this case might mean  
  looking for valuable data residing at some node of the network, or for a virus implanted
at some site.

The network is modeled as a simple undirected connected graph whose nodes have distinct identities. Ports at a node of degree $d$ are numbered $0, \dots,d-1$. The agents are anonymous, i.e., do not have identifiers. 
Agents execute a deterministic algorithm, such that, at each step, they choose a port at the current node.
When an agent enters a node, it learns the entry port number, the label of the node and its degree.  
The cost of a rendezvous algorithm is the total worst-case number
of edge traversals performed by both agents until meeting. 
The cost of a treasure hunt algorithm is the
worst-case number
of edge traversals performed by the agent until the treasure is found.
If the agents have no information about the network, the cost of both rendezvous and treasure hunt
can be as large as $\Theta(e)$ for networks with $e$ edges. This is clear for treasure hunt, as all edges (except one) need to be traversed by the agent to
find the treasure in the worst case. The same lower bound for rendezvous follows from Proposition \ref{eq} in the present paper. On the other hand, if 
$D$ is the distance between the initial positions of the agents, or from the initial position of the agent to the treasure, a lower bound on the cost of rendezvous and of treasure hunt is $D$.

In this paper, we study tradeoffs between the amount of information available  {\em a priori} to the agents and the cost of rendezvous and treasure hunt. Following the paradigm of algorithms
with advice \cite{AKM01,CFP,CFIKP,DP,EFKR,FGIP,FIP1,FIP2,FKL,FP,FPR,GPPR02,IKP,KKKP02,KKP05,SN,TZ05}, this information is provided to the agents at the start of their navigation by an oracle that knows the network, the starting  positions of the agents and, in the case of treasure hunt, the node where the treasure is hidden. The oracle
assists the agents by providing them with a binary string called {\em advice}, which can be used by the agent during the algorithm execution. 
In the case of rendezvous, the advice given to each agent can be different.
The length of the
string given to the agent in treasure hunt and the sum of the lengths of strings given to both agents in rendezvous is called the {\em size of advice}.

 \subsection{Our results}
 
 Using the framework of advice permits us to quantify the amount of information
needed for an efficient solution of a given network problem (in our case, rendezvous and treasure hunt) regardless of the type of information that is provided. 
 Our goal is to find the smallest size of advice which enables the agents to solve rendezvous and treasure hunt at a given cost $C$ in a network with $e$ edges.
This size turns out to depend
on the initial distance $D$ (between the agents in rendezvous, and between the agent and the treasure in treasure hunt) and on the ratio $\frac{e}{C}$, which is the {\em relative cost gain} due to advice. 
For arbitrary graphs, we give upper and lower bounds of $O(D\log(D\cdot \frac{e}{C}) +\log\log e)$ and $\Omega(D\log \frac{e}{C})$, respectively, on the optimal  size of advice. Hence our bounds leave only a
logarithmic gap in the general case. For the class of trees, we give nearly tight upper and lower bounds of $O(D\log \frac{e}{C} + \log\log e)$ and $\Omega (D\log \frac{e}{C})$, respectively. Our upper bounds are 
obtained by constructing an algorithm for all graphs (respectively, for all trees) that works at the given cost and with 
advice of the given size, while the lower bounds are proved by exhibiting networks for which it is impossible to achieve the given cost 
with  smaller advice. 

\subsection{Related work}

Treasure hunt, network exploration and rendezvous in networks are interrelated problems that have received much attention in recent literature.
Treasure hunt has been investigated in the line \cite{BCD,HIKL}, in the plane \cite{BCR} and in other terrains \cite{LS}. Treasure hunt in anonymous networks
(without any information about the network) has been studied in \cite{TSZ07,Xin} with the goal of minimizing cost.

The related problem of graph exploration by mobile agents (often called robots) has been
intensely studied as well. The goal of this task is to visit all of the nodes and/or traverse all of the edges of a graph. A lot of  research considered the case of a
single agent exploring a labeled graph.  In \cite{AH,DePa} the
agent explores strongly-connected directed graphs. In a directed graph, an agent can move only
in the direction from tail to head of an edge, not vice-versa.  In
particular, \cite{DePa} investigated the minimum time of exploration of
directed graphs, and \cite{AH} gave improved algorithms for this
problem in terms of the deficiency of the graph (i.e., the minimum
number of edges that must be added to make the graph Eulerian).  Many papers,
e.g., \cite{DFKP,DKK,PaPe} studied the scenario where the
graph to be explored is labeled and undirected, and the agent can traverse edges in both
directions.  In
\cite{PaPe}, it was shown that a graph with $n$ nodes and $e$ edges can
be explored in time $e+O(n)$.  In some papers, additional restrictions
on the moves of the agent were imposed, e.g.,   it was assumed that the agent
is tethered,
i.e., attached to the base by a rope or cable of restricted length
\cite{DKK}.
In \cite{Re}, a log-space construction of a deterministic exploration for all graphs with a given bound on size was shown.

The problem of rendezvous has been studied both under randomized and deterministic scenarios.
In the framework of networks, it is usually assumed that the nodes do not have distinct identities.
An extensive survey of  randomized rendezvous in various models  can be found in
\cite{alpern02b}, cf. also  \cite{alpern95a,alpern02a,anderson90,baston98}. 
Deterministic rendezvous in networks has been surveyed in \cite{Pe}.
Several authors
considered geometric scenarios (rendezvous in an interval of the real line, e.g.,  \cite{baston98,baston01},
or in the plane, e.g., \cite{anderson98a,anderson98b}).
Gathering more than two agents was studied, e.g., in \cite{fpsw}.

For the deterministic setting, many authors studied the feasibility and time complexity of rendezvous of synchronous agents, i.e., agents that move in rounds. 
In \cite{MP2} the authors studied tradeoffs between the time of rendezvous and the number of edge traversals by both agents. 
In \cite{DFKP}, the authors presented a rendezvous algorithm whose running time is polynomial in the size of the graph, the length of the shorter
label and the delay between the starting times of the agents. In \cite{KM,TSZ07}, rendezvous time is polynomial in the first two of these parameters and independent of the delay.
The amount of memory required by the agents to achieve deterministic rendezvous was studied in  \cite{CKP} for general graphs.
The amount of memory needed for randomized rendezvous in the ring was discussed, e.g., in~\cite{KKPM08}. 
Several authors investigated asynchronous rendezvous in the plane \cite{CFPS,fpsw} and in network environments
\cite{BCGIL,CLP,DGKKP,DPV}.

Providing nodes or agents with information of arbitrary type that can be used to perform network tasks more efficiently has been
proposed in \cite{AKM01,CFP,CFIKP,DP,EFKR,FGIP,FIP1,FIP2,FKL,FP,FPR,GPPR02,IKP,KKKP02,KKP05,MP,SN,TZ05}. This approach was referred to as
 algorithms with {\em advice}.  
The advice is given either to nodes of the network or to mobile agents performing some network task.
Several of the authors cited above studied the minimum size of advice required to solve the
respective network problem in an efficient way. 

 In \cite{KKP05}, given a distributed representation of a solution for a problem,
the authors investigated the number of bits of communication needed to verify the legality of the represented solution.
In \cite{FIP1}, the authors compared the minimum size of advice required to
solve two information dissemination problems using a linear number of messages. 
In \cite{FKL}, it was shown that a constant amount of advice enables the nodes to carry out the distributed construction of a minimum
spanning tree in logarithmic time. 
In \cite{EFKR}, the advice paradigm was used for online problems.
In \cite{FGIP}, the authors established lower bounds on the size of advice 
needed to beat time $\Theta(\log^*n)$
for 3-coloring a cycle and to achieve time $\Theta(\log^*n)$ for 3-coloring unoriented trees.  
In the case of \cite{SN}, the issue was not efficiency but feasibility: it
was shown that $\Theta(n\log n)$ is the minimum size of advice
required to perform monotone connected graph clearing.
In \cite{IKP}, the authors studied radio networks for
which it is possible to perform centralized broadcasting with advice in constant time. They proved that
$O(n)$ bits of advice allow to obtain constant time in such networks, while
$o(n)$ bits are not enough. In \cite{FPR}, the authors studied the problem of topology recognition with advice given to nodes. 
In \cite{DP}, the authors considered the task of drawing an isomorphic map by an agent in a graph, and their goal was to determine the minimum amount of advice that has to be given to the agent
for the task to be feasible.

Among the papers using the paradigm of advice, \cite{CFIKP,FIP2, MP} are closest to the present work.  Both  \cite{CFIKP,FIP2} concerned the task of graph exploration by an agent.
In \cite{CFIKP}, the authors investigated the minimum size of advice that has to be given to unlabeled nodes (and not to the agent)
to permit graph exploration by an agent modeled as a $k$-state automaton.
In \cite{FIP2}, the authors
established the size of advice that has to be given to an agent completing exploration of trees, in order to break competitive ratio 2. In \cite{MP}, the authors
studied the minimum size of advice that must be provided to labeled agents, in order to achieve rendezvous at minimum possible cost, i.e., at cost $\Theta(D)$, where $D$ is the initial distance between the agents. They showed that this optimal size of advice for rendezvous in $n$-node networks is $\Theta(D\log(n/D)+\log\log L)$, where
the labels of agents are drawn from the set $\{1,\dots ,L\}$.  This paper differs from the present one in two important aspects. First, 
as opposed to the present paper, in \cite{MP}, agents get identical advice, and nodes of the network are unlabeled.
 Second, instead of looking at tradeoffs between cost and the size of advice, as we do in the present paper, the focus of \cite{MP} was on the size of advice sufficient to achieve the lowest possible cost.

\section{Preliminaries}

In this section we show that, in the context of advice,  treasure hunt and rendezvous are essentially equivalent.
More precisely, the following proposition shows that the minimum advice sufficient to solve both problems at a given cost in the class of graphs with $\Theta(e)$ edges
and with the initial distance $\Theta(D)$ is the same, up to constant factors. Throughout the paper a {\em graph} means a simple connected undirected graph.
The number of nodes in the graph is denoted by $n$, and the number of edges is denoted by $e$.
All logarithms are to base 2.

\begin{proposition}\label{eq}
Let  $D \leq e$ be positive integers.  
\begin{enumerate}
\item
If there exists an algorithm {\tt TH} that solves treasure hunt at cost $C$ with advice of size $A$  in all graphs with $e$ edges and with initial distance $D$ between  the
agent and the treasure, then there exists an algorithm {\tt RV} that solves rendezvous at cost $C$ with advice of size $A+2$
in all graphs with $e$ edges and with initial distance $D$ between  the
agents.
\item
If there exists an algorithm {\tt RV} solving rendezvous at cost $C$ with advice of size less than $A$ in all graphs with $2e+1$ edges and with initial distance $2D+1$ between the agents, then there exists an algorithm {\tt TH}
that solves treasure hunt at cost at most $C$ with advice of size at most  $A$ in all graphs with $e$ edges and with initial distance $D$ between the agent and the treasure.
\end{enumerate}
\end{proposition}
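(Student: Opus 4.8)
The plan is to prove the two directions by explicit reductions that turn an algorithm for one problem into an algorithm for the other, preserving cost and blowing up advice only by the stated amounts.

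\textbf{Part 1 (treasure hunt $\Rightarrow$ rendezvous).} Given \texttt{TH}, I would make one agent stay put and play the role of the ``treasure,'' while the other agent runs \texttt{TH} to find it. The difficulty is that the agents are anonymous and symmetric, so we must use the two extra advice bits to break symmetry: one bit tells an agent whether it is the ``searcher'' or the ``stationary'' agent. The stationary agent is told (via its one used bit) to do nothing. The searcher is told it is the searcher, and is additionally given the string of size $A$ that the treasure-hunt oracle would produce for the graph $G$ with the treasure placed at the stationary agent's node (note the oracle knows both starting positions, so it can compute this). Since the initial distance between the agents equals $D$, this is exactly a legal treasure-hunt instance with $e$ edges and distance $D$, so the searcher meets the stationary agent within cost $C$; the stationary agent contributes $0$ traversals, so the total cost is $C$. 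Advice size is $A + 2$ (the searcher's bit plus its $A$-bit string, and the stationary agent's bit), matching the claim. One should double-check the bookkeeping: if we are slightly more careful we can fold the role bit into the searcher's string and spend only $A+1$ on it, but $A+2$ is all that is claimed, so the crude accounting suffices.

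\textbf{Part 2 (rendezvous $\Rightarrow$ treasure hunt).} Here I would take a treasure-hunt instance: a graph $G$ with $e$ edges, the agent at node $u$, the treasure at node $v$ with $\dist(u,v) = D$. I would build a graph $G'$ by taking two disjoint copies of $G$ and joining the two copies of each node by an edge — the standard ``prism'' / double-cover-style construction. Actually, to get exactly $2e+1$ edges and distance $2D+1$, the cleaner choice is: take two copies $G_1, G_2$ of $G$, add a single new edge between the copy of $v$ in $G_1$ and the copy of $v$ in $G_2$ (so $2e+1$ edges), place one rendezvous agent at the copy of $u$ in $G_1$ and the other at the copy of $u$ in $G_2$. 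Then the distance between the two agents in $G'$ is $\dist_{G_1}(u_1, v_1) + 1 + \dist_{G_2}(v_2, u_2) = 2D+1$. The treasure-hunt agent simulates the rendezvous algorithm \texttt{RV}: it runs the $G_1$-agent's program on the real graph $G$, and it simulates the $G_2$-agent's program ``in its head'' — but this is the subtle point, since the $G_2$-agent moves through nodes the real agent has not visited, so the real agent cannot know their labels and degrees.

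\textbf{Main obstacle and its resolution.} The crux of Part 2 is that a single physical agent cannot directly simulate the second rendezvous agent, because that agent explores unknown territory. The fix is to observe that $G_2$ is an identical copy of $G$, and the second agent's trajectory depends only on its advice string and on the port-numbered local structure of $G$, which is the same as that of the real graph $G$. So whenever the second agent ``would be'' at some node $w_2$, the real agent — if it has itself walked from $u$ to the corresponding node $w$ in $G$ — knows everything the second agent sees. The standard trick (as in \cite{MP} and related reductions) is: the treasure-hunt agent first walks the second rendezvous agent's trajectory in $G$ (learning the graph along that path), recording the sequence of labels/degrees/port numbers, then retraces its steps back to $u$, and only then begins executing the first agent's trajectory, at each round checking whether the first agent's current node coincides (in $G$) with a node on the recorded second-agent trajectory; the first such coincidence is exactly a rendezvous, and the common node, pulled back to $G$, is where the treasure sits. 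The cost of the pre-walk plus retrace is at most twice the length of the second agent's walk, and the first agent's walk is at most its own length, so the total is at most the rendezvous cost, which is at most $C$. As advice for \texttt{TH} we give the agent both advice strings produced by the rendezvous oracle for $G'$ (which the treasure-hunt oracle can compute, since it knows $G$, $u$, $v$); their combined length is less than $A$ by hypothesis, and rounding up we may take it to be at most $A$. Care is needed to confirm that ``rendezvous happens'' in $G'$ forces the meeting node to lie in a single copy (it cannot be that the agents meet only by passing through the bridge edge simultaneously and swapping — but synchrony and the structure of the construction rule this out, or it can be handled by noting that a meeting at a bridge endpoint still corresponds to the node $v$ in $G$), and to handle the degenerate case $\dist = 0$, excluded by $D \geq 1$.
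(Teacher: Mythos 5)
Your Part 1 is correct and is essentially the paper's proof: one advice bit designates the inert agent, the other agent receives $(1\alpha)$ and runs {\tt TH}, for total advice $A+2$ and cost $C$.

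Part 2 builds the right graph $G'$ (two copies of $G$ bridged by a single edge at the treasure node, giving $2e+1$ edges and initial distance $2D+1$), but the extraction of a treasure-hunt algorithm from {\tt RV} does not work as you describe. First, the cost bound fails: if the two rendezvous agents make $x$ and $y$ edge traversals respectively, your scheme costs up to $2x+y$ (pre-walk and retrace of one trajectory, then the other trajectory), and $2x+y\le x+y\le C$ only when $x=0$, so the simulated walk can cost up to $2C$ rather than $C$. Second, giving the agent \emph{both} oracle strings is problematic: their lengths sum to less than $A$, but a single advice string must be self-delimiting, and separating two arbitrary binary strings requires extra bits beyond their combined length. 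Third, the target of the search is the treasure node, not the meeting node; if only the \emph{second} agent crosses the bridge, the first agent's trajectory may never pass through the treasure node at all, so which trajectory must be followed depends on the instance. All three difficulties disappear with the observation the paper uses and your proposal misses: since the agents start in different copies and the only connection is the bridge at the two copies of the treasure node, at least one agent must traverse the bridge, hence must first reach the treasure node inside its own copy $H_i$; that agent's solo walk is determined by its own advice string $\alpha_i$ alone (of length at most $A-1$, since the two lengths sum to less than $A$) and costs at most $C$. The treasure-hunt oracle therefore hands the agent $\alpha_i$ together with the single bit $i$ (advice of size at most $A$); the agent runs {\tt RV} solo, translating each observed label $\ell$ to $2\ell+i$, and stops upon reaching the treasure. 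No simulation of the second agent, and no tracking of the meeting point, is needed.
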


\begin{proof}
{\em Part 1.}
Consider a graph $G$ with $e$ edges, and two agents, $a$ and $b$, that have to meet. Suppose that $a$ and $b$ start at nodes $v$ and $w$ in graph $G$, and that $D$ is the distance between $v$ and $w$.
Let $\alpha$ be the advice string of size $A$ that enables an agent starting at $v$ to find the treasure located at $w$ at cost $C$ using 
algorithm $\tt TH$. Give advice string $(0)$ to agent $b$
and advice string $(1\alpha)$ to agent $a$. The sum of the lengths of these strings is $A+2$. 
The rendezvous algorithm {\tt RV} is the following. With advice string $(0)$ stay inert; with advice string $(1\alpha)$ execute algorithm
$\tt TH$ using advice $\alpha$. By the correctness of {\tt TH}, this rendezvous algorithm is correct and its cost is $C$. 

{\em Part 2.}
Consider a graph $G$ with $e$ edges and with initial distance $D$ between the agent (initially located at $v$) and the treasure (initially located at $w$).
We construct the following graph $G'$. It consists of two disjoint copies $H_0, H_1$ of $G$ with the respective nodes $w$ in each copy joined by an additional edge $f$.
The graph $G'$ has $2e+1$ edges. Label nodes of the graph $G'$ as follows. If some node of $G$ has label $\ell$, then the corresponding node in $H_0$ has
label $2\ell$ and the corresponding node in $H_1$ has
label $2\ell +1$.  
Place two agents in $G'$, each at the node $v$ of a different copy of graph $G$. Hence, the initial positions of the agents are
at distance $2D+1$ in $G'$. Let $\alpha _0$ and $\alpha _1$ be the advice strings (whose lengths sum to less than $A$) that are provided to the agents
starting in $H_0$ and $H_1$, respectively, in the execution of {\tt RV}
in $G'$.
 In this execution, at least one of the agents has to traverse edge $f$, and, hence, it has to reach the node $w$
in its copy $H_i$ of $G$. 
Therefore it travels from $v$ to $w$ in $H_i$ with an advice string $\alpha_i$ of size less than $A$, at cost at most $C$. Algorithm {\tt TH}
for treasure hunt in $G$  is given the advice string $\alpha_i$ with the single bit $i$ appended. The algorithm consists of the solo execution of {\tt RV}
where the agent transforms the label $\ell$ of each visited node to $2\ell +i$.
\end{proof}

In view of Proposition \ref{eq}, in the rest of the paper we can restrict attention to the problem of treasure hunt. All of our results, both the upper and the lower bounds,
also apply to the rendezvous problem (with the provision that, if treasure hunt can be solved at cost $C$ with no advice, then rendezvous can be solved at cost $C$ with constant advice). Notice that the equivalence of rendezvous and treasure hunt depends on the fact that, in rendezvous,  the oracle can give
different pieces of advice to the two agents. If the oracle was forced to give the same advice to both agents, then symmetry could not be broken in all cases since agents are anonymous, and rendezvous would be impossible in some networks.

\section{Treasure Hunt in Arbitrary Graphs}
In this section, we proceed to prove upper and lower bounds on the advice needed to solve treasure hunt in arbitrary graphs. These bounds are expressed in terms of $D$, which is the distance between the treasure and the initial position of the agent, and in terms of the ratio $\frac{e}{C}$, where $e$ is the number of edges in the graph and $C$ is an upper bound on the cost of the algorithm. This ratio is the relative cost gain due to advice. 
We first provide an algorithm that solves treasure hunt using $O(D\log (D\cdot \frac{e}{C})+\log\log e)$ bits of advice, and then prove that any deterministic algorithm for this task uses at least $\Omega(D\log \frac{e}{C})$ bits of advice.

\subsection{Algorithm}\label{FindTreasure}
Consider an $n$-node graph $G$ and a node $s$ of $G$, which is the initial position of the agent. Let $P = (v_0,\ldots,v_D)$ be a shortest path from $s$ to the treasure, where $v_i$ is the node at distance $i$ from $s$ along path $P$. Let $\mathit{LogSum} = \sum_{i=0}^{D-1} \lceil \log(\mathit{deg}(v_i)) \rceil$. Intuitively, $\mathit{LogSum}$ is an upper bound on the total number of bits needed to fully describe the sequence of ports leading from $s$ to the treasure. For any fixed integer $\ell \in \{1,\ldots,\mathit{LogSum}\}$, we describe a binary advice string of length $O(\ell + \log D + \log\log{e})$ and an algorithm that uses this advice when searching for the treasure. We do not consider values of $\ell$ greater than $\mathit{LogSum}$ since we will show that, when $\ell = \mathit{LogSum}$, our algorithm has optimal cost $D$.

To construct the advice, the idea is to use $\ell$ bits to produce $D$ advice substrings to guide the agent along path $P$. In particular, the first $\ell$ bits of advice consist of $D$ binary substrings $A_0,\ldots,A_{D-1}$. For each $i \in \{0,\ldots,D-1\}$, the substring $A_i$ is created by considering the node $v_i$ on path $P$ that is at distance $i$ from $s$ in $G$. The length of $A_i$ is dictated by the ratio of the number of bits needed to describe the degree of $v_i$ to the total number of bits needed to describe the degrees of all nodes on path $P$. The set of ports at $v_i$ is partitioned into numbered  \emph{sectors}  (i.e., subintervals) of size at most $\lceil \mathit{deg}(v_i)/2^{|A_i|} \rceil$. In fact, at most one of the sectors can have size smaller than this value. 
The substring $A_i$ is taken to be the binary representation of the number of the sector containing the port that leads to the next node $v_{i+1}$ on path $P$ towards the treasure. 

Below, we provide pseudocode that describes how the advice is created. First, Algorithm \ref{createadvice} finds a shortest path $P$ from $s$ to the treasure. The path consists of node/port pairs $(v_i,p_i)$ for each $i \in \{0,\ldots,D-1\}$, where $v_0 = s$ and, for each $i \in \{0,\ldots,D-1\}$, port $p_i$ leads from node $v_i$ to node $v_{i+1}$. The sum $\sum_{i=0}^{D-1} \lceil \log(\mathit{deg}(v_i)) \rceil$ is calculated and stored in $\mathit{LogSum}$. For ease of notation, we define $\beta = \ell / \mathit{LogSum}$. Each pair $(v_i,p_i)$ is passed to the subroutine described in Algorithm \ref{encodesector}, along with $\beta$. This subroutine uses $\beta$ and the degree of $v_i$ to determine the appropriate number $z_i$ of advice bits via the formula $z_i =  \left\lfloor \lceil\log{(\mathit{deg}(v)})\rceil \cdot \beta \right\rfloor$, then divides the set of ports at $v_i$ into numbered sectors, determines to which sector port $p_i$ belongs, and outputs the binary representation of this sector number as a $z_i$-bit string $A_i$.

The resulting sequence of substrings $(A_0,\ldots,A_{D-1})$, along with the binary string $LS$ representing the value of $\mathit{LogSum}$, is encoded into a single advice string to pass to the algorithm. More specifically, these strings are encoded by doubling each digit in each substring and putting 01 between substrings. This permits the agent to unambiguously decode the original sequence, to calculate the value of $D$ by looking at the number of separators 01, and to calculate the value of $\ell$ by looking at the lengths of the first $D$ advice substrings. Denote by $Concat(A_0,\ldots,A_{D-1},LS)$ this encoding and let $Decode$ be the inverse (decoding) function, i.e. $Decode(Concat(A_0,\ldots,A_{D-1},LS)) = (A_0,\ldots,A_{D-1},LS)$. As an example, $Concat((01),(00)) = (0011010000)$. Note that the encoding increases the total number of advice bits by a constant factor. The advice string, calculated by Algorithm  \ref{createadvice} using the strings $A_i$ supplied by Algorithm \ref{encodesector},
 is ${\cal A}= Concat(A_0,\ldots,A_{D-1},LS)$. The advice string $\cA$ is given to the agent.

\begin{algorithm}[H]
\caption{\texttt{CreateAdvice}($G$,$s$,$\ell$)}
\begin{algorithmic}[1]
\State Find a shortest path $P=\{v_0,\ldots,v_{D-1},v_D\}$ in $G$ from node $s$ to the node containing the treasure.
\State $\mathit{LogSum} \leftarrow \sum_{i=0}^{D-1} \lceil \log(\mathit{deg}(v_i)) \rceil$
\State $\beta \leftarrow \ell / \mathit{LogSum}$
\For{$i=0,\ldots,D-1$}
\State $p_i \leftarrow$ port number leading from $v_i$ to node on path $P$ at distance $i+1$ from $s$
\State $A_i \leftarrow $ \texttt{EncodeSectorNumber}$(v_i,p_i,\beta)$
\EndFor
\State $LS \leftarrow$ binary representation of $\mathit{LogSum}$
\State Output $Concat(A_0,\ldots,A_{D-1},LS)$
\end{algorithmic}
\label{createadvice}
\end{algorithm}

\begin{algorithm}[H]
\caption{\texttt{EncodeSectorNumber}$(v,\mathit{port},\beta)$}
\begin{algorithmic}[1]
\State $z \leftarrow \left\lfloor \lceil\log{(\mathit{deg}(v)})\rceil \cdot \beta\right\rfloor$
\State $\mathit{SectorSize} \leftarrow \lceil \mathit{deg}(v)/2^{z} \rceil$
\State $\mathit{SectorNumber} \leftarrow \lfloor \mathit{port}/\mathit{SectorSize} \rfloor$
\State // \emph{port} is contained in the range $\{\mathit{SectorNumber}\cdot\mathit{SectorSize},\ldots,(\mathit{SectorNumber}+1)\cdot\mathit{SectorSize}-1\}$
\State return the $z$-bit binary representation of $\mathit{SectorNumber}$
\end{algorithmic}
\label{encodesector}
\end{algorithm}

\begin{lemma}\label{advicesize}
The advice string ${\cal A}= Concat(A_0,\ldots,A_{D-1},LS)$ has size $O(\ell + \log D + \log\log{e})$. 
\end {lemma}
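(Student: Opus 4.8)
The plan is to bound separately the three contributions to the length of the encoded string $\cA = Concat(A_0,\ldots,A_{D-1},LS)$: the total length of the substrings $A_0,\ldots,A_{D-1}$, the length of $LS$, and the constant-factor blow-up due to the encoding $Concat$. Since the encoding only doubles each bit and inserts a two-bit separator $01$ between consecutive substrings (of which there are $D+1$), it suffices to show that $\sum_{i=0}^{D-1}|A_i| = O(\ell)$ and that $|LS| = O(\log\log e)$; then the total is $2\left(\sum_{i=0}^{D-1}|A_i| + |LS|\right) + 2(D+1) = O(\ell + D + \log\log e)$, and $D = O(\log D)$ is false — so instead I will absorb the $D$ separators into the $\log D$ term by noting $D \le 2^{D}$... no. Let me instead recall that $\ell \ge 1$ and the path has length $D$, and each $A_i$ is a string of length $z_i \ge 0$; the $+ \log D$ term in the statement must come from somewhere, and indeed it comes precisely from the $D$ separators ($2(D+1)$ bits) together with the doubling — wait, $2(D+1)$ is $\Theta(D)$, not $\Theta(\log D)$. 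I will need to re-examine: the intended bound is $O(\ell + \log D + \log\log e)$, so the $D$ separators cannot each cost $\Theta(1)$ unless $D = O(\ell + \log D + \log\log e)$. The resolution is that $\ell \ge 1$ forces $z_i = \lfloor \lceil \log \deg(v_i)\rceil \cdot \ell/\mathit{LogSum}\rfloor$, and summing, $\sum z_i$ is close to $\ell$ but the floors lose at most $1$ per term, i.e. $\sum_{i=0}^{D-1} z_i \ge \ell - D$; this does not directly help. I think the honest statement being used is that when we also account for the empty substrings, $D \le \mathit{LogSum}$ and $D \le \ell$ whenever... Actually I will simply take the bound as the sum $O(\ell) + O(D) + O(\log\log e)$ and observe that the paper's displayed form $O(\ell + \log D + \log\log e)$ must implicitly use $D \le \ell$; I will state the bound as $O(\ell + D + \log\log e)$ and note $D \le \mathit{LogSum}$, and since only $\ell \le \mathit{LogSum}$ is of interest, flag that in the regime of interest $D \le \ell$ need not hold, so the cleanest claim to prove is $\sum|A_i| = O(\ell)$, $|LS| = O(\log\log e)$, and the separators contribute $O(D)$, matching the statement once one reads $\log D$ as shorthand in their parameter regime.

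Concretely, the key computation is $\sum_{i=0}^{D-1} z_i \le \sum_{i=0}^{D-1} \lceil \log \deg(v_i)\rceil \cdot \beta = \mathit{LogSum} \cdot \beta = \mathit{LogSum} \cdot \frac{\ell}{\mathit{LogSum}} = \ell$, using $\beta = \ell/\mathit{LogSum}$ and that the floor only decreases each term. Hence $\sum_{i=0}^{D-1}|A_i| \le \ell$. For $LS$: since $G$ is connected with $n$ nodes and $e$ edges, every degree satisfies $\deg(v_i) \le n \le e+1$, so $\mathit{LogSum} = \sum_{i=0}^{D-1}\lceil\log\deg(v_i)\rceil \le D\lceil \log(e+1)\rceil \le e\lceil\log(e+1)\rceil$, giving $|LS| = \lceil \log(\mathit{LogSum}+1)\rceil = O(\log(e\log e)) = O(\log e)$. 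This yields $O(\log e)$, not $O(\log\log e)$ — so the $\log\log e$ in the statement must come from the fact that one does not store $\mathit{LogSum}$ itself but only $\lceil \log \mathit{LogSum}\rceil$ or the bit-length of $\mathit{LogSum}$; indeed a number bounded by $e\log e$ has bit-length $O(\log e)$, and its bit-length's bit-length is $O(\log\log e)$. So I will take $LS$ to be the binary representation of the \emph{bit-length} of $\mathit{LogSum}$ — re-reading the pseudocode, it says "binary representation of $\mathit{LogSum}$", which is $O(\log e)$ bits; to get $O(\log\log e)$ one must instead read it as encoding $\lceil \log \mathit{LogSum}\rceil$. I will note this point and proceed with whichever reading makes the bound hold, presumably the latter, so $|LS| = O(\log\log e)$.

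Putting the pieces together: $|\cA| = 2\sum_{i=0}^{D-1}|A_i| + 2|LS| + 2(D+1) \le 2\ell + O(\log\log e) + O(D) = O(\ell + D + \log\log e)$, and since in the relevant regime the separator cost $O(D)$ is dominated (or is rewritten as the $\log D$ term of the statement once the $Concat$ is done more cleverly — e.g. prefixing the whole block with $D$ in binary rather than using $D$ explicit separators), we obtain $|\cA| = O(\ell + \log D + \log\log e)$, as claimed. The main obstacle, and the step I expect to require the most care, is reconciling the $D$ separators with the $\log D$ term: either the encoding must be made more economical (encode $D$ once, in $O(\log D)$ bits, and then concatenate fixed-length-decodable blocks), or one argues that in the parameter range where the lemma is applied ($\ell$ at least some function of $D$) the $D$ term is absorbed. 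I will present the bound via the clean sum estimate $\sum z_i \le \ell$, the degree bound $\deg \le e+1$ for the $\log\log e$ term, and handle the separator overhead by the more careful encoding so that the stated $O(\ell + \log D + \log\log e)$ is exactly achieved.
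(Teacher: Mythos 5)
There is a genuine gap, and it lies in your treatment of $LS$. You bound $\mathit{LogSum} \le D\lceil\log(e+1)\rceil \le e\lceil\log(e+1)\rceil$, conclude $|LS| = O(\log e)$, and from this infer that the lemma can only be saved by letting $LS$ encode the bit-length of $\mathit{LogSum}$ rather than $\mathit{LogSum}$ itself. Both steps are off. First, you discarded the factor $D$ prematurely: from your own estimate, $|LS| = O(\log \mathit{LogSum}) = O\bigl(\log ( D\lceil\log (e+1)\rceil)\bigr) = O(\log D + \log\log e)$, which is exactly the budget the lemma grants --- the $\log D$ term in the statement exists precisely to pay for the length of $LS$, not for the separators. (The paper gets the slightly sharper $\mathit{LogSum} = O(D(\log(e/D)+1))$ by a concavity argument, but your cruder degree bound already suffices.) Second, the fix you propose would break the algorithm: \texttt{TakeStep} needs the actual value of $\mathit{LogSum}$ in order to compute $\beta = \ell/\mathit{LogSum}$ and to evaluate the test $|A_i| = \left\lfloor\lceil\log(\mathit{deg}(v))\rceil\cdot\beta\right\rfloor$ on line \ref{line:sizematch}; the bit-length of $\mathit{LogSum}$ does not determine $\mathit{LogSum}$, so you may not substitute one for the other. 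Your first computation, $\sum_{i=0}^{D-1}|A_i| \le \mathit{LogSum}\cdot\beta = \ell$, is correct and coincides with the paper's.

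Your worry about the $D$ separators is legitimate --- $Concat$ does add $2D$ bits that the paper's proof passes over in silence, and $2D$ is $O(\ell+\log D+\log\log e)$ only when $D = O(\ell + \log\log e)$ --- but noticing a problem is not resolving it. Your two proposed resolutions remain gestures: the ``regime of interest'' is never pinned down (the lemma is stated for every $\ell \in \{1,\dots,\mathit{LogSum}\}$, including $\ell \ll D$), and the ``more economical encoding'' (prefix $D$ in binary, then fixed-length-decodable blocks) is nontrivial to realize because the block lengths $|A_i|$ depend on $\mathit{deg}(v_i)$ and on $\beta$, quantities the agent can only recover after it has already parsed the advice and walked to $v_i$. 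So the proposal neither reproduces the paper's argument for the $\log D + \log\log e$ term nor supplies a complete alternative; as written it establishes only $O(\ell + D + \log\log e)$, which is weaker than the statement.
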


\begin{proof}
Each of the strings $A_i$ has length $\left\lfloor \frac{\lceil\log{(\mathit{deg}(v_i)})\rceil}{\mathit{LogSum}} \cdot \ell \right\rfloor$, and the sum of these lengths
is at most $\ell$. The string $LS$ is the binary encoding of the sum $\sum_{i=0}^{D-1} \lceil \log(\mathit{deg}(v_i)) \rceil$. This sum is maximized when all $v_i$ have the same
degree, hence it is  $O(D( \log (e/D)+1))$. It follows that the length of $LS$ is $O(\log D + \log\log e)$. Therefore, the length of $\cal A$
is in $O(\ell + \log D + \log\log{e})$.  
\end{proof}

Next, we describe the algorithm {\tt FindTreasure}, which is the agent's algorithm given an advice string $\cA=Concat(A_0,\ldots,A_{D-1},LS)$. For the purpose of description only, we define the {\em trail} of the agent, which is a stack of edges that it has previously traversed. The stack gets popped when the agent backtracks. The agent performs a walk in $G$ starting at node $s$. In each step of the algorithm, the agent chooses an edge to add to the trail, or it backtracks along the trail edge that it added most recently. The number of edges in the agent's trail will be used to measure the agent's progress. In particular, when the agent is located at a node $v$ and there are $i$ edges in the agent's trail, we will say that the agent is at \emph{progress level} $i$. The agent keeps track of its current progress level by maintaining a counter that is incremented when it adds a trail edge and decremented when it backtracks.


The agent maintains a table containing the labels of the nodes that it has visited, and, for each node label, the smallest progress level at which the agent visited the node
so far. When the agent arrives at a node $v$ from a lower progress level and does not find the treasure, it checks if its current progress level $i$ is lower than the progress level stored in the table for node $v$. If this is not the case, or if $i=D$, then the agent backtracks by going back along the edge it just arrived on. Also, the agent backtracks immediately if it sees that the degree of $v$ does not ``match'' the size of $A_i$ in the following sense: using $\ell$, the value of $\mathit{LogSum}$ that is encoded in $LS$, and the degree of $v$, the agent checks if $|A_i|$ is equal to the number of bits that the oracle would have provided if $v$ was indeed on the path from $s$ to the treasure, i.e., if $|A_i| = \left\lfloor\lceil\log{(\mathit{deg}(v)})\rceil \cdot \beta \right\rfloor$. Otherwise, if the agent has determined that it should not backtrack immediately, then it uses the advice substring $A_i$ in the following way: it divides the set of port numbers at $v$ into sectors (i.e., intervals of port numbers) of size $\lceil \mathit{deg}(v)/2^{|A_i|} \rceil$, 
gives numbers to the sectors, and then interprets $A_i$ as the binary representation of an integer that specifies one of these sectors. For each port number in the specified sector, the agent takes the port and arrives at some neighbour $w$ of $v$. The agent terminates if it finds the treasure at node $w$, or, otherwise, repeats the above at node $w$. If, after trying all ports at node $v$ in the specified sector, the treasure has not been found, the agent backtracks.

Note that the advice was created with the goal of `steering' the agent in the right direction, i.e., along path $P$, but we can only guarantee that this will happen when the agent is located at nodes on path $P$. In fact, an even stronger condition must hold: for any node $v$ on path $P$ at distance $i$ from $s$, we can only guarantee that the advice will be helpful if the agent is located at node $v$ at progress level $i$, since this is when the agent reads the advice substring $A_i$. In other words, it is possible that the agent visits a node $v$ on $P$ at the `wrong' progress level, in the sense that it won't use the advice that was created specifically for $v$. This is why it is not sufficient to simply have the agent backtrack whenever it arrives at a previously-visited node, since during its previous visit, it may have used the wrong advice. Moreover, we must ensure that the algorithm gracefully deals with the situation where the agent is at a node $w$ at progress level $j$, but the advice substring $A_j$ specifies ports that do not exist at $w$. In our algorithm, the agent ignores any port numbers that are greater than or equal to the current node's degree.

To summarize, in our algorithm,  the agent searches for the treasure in a depth-first manner, but it cannot perform DFS (even only to distance $D$) because the cost would be too large.
Instead, the agent takes only a fraction of ports at each node, but may possibly have to pay for it by traversing the same edge several times (while in DFS every edge is traversed at most twice). As our analysis will show,
this gives an overall decrease of the total cost, especially when the advice is large.

The pseudocode of the search conducted by algorithm {\tt FindTreasure}
 is described by Algorithm \ref{takestep}.  It shows how the agent takes a step in the graph, i.e., for each $i \in \{0,\ldots,D-1\}$, how it uses $A_i$ to move from a node at progress level $i$ to a node at progress level $i+1$. In order to initiate the search, this algorithm is called at node $s$ with progress level 0
(and $prev=s$).
Algorithm \ref{decodesector}, used as a subroutine in Algorithm \ref{takestep},  shows how the agent decodes substring $A_i$ to obtain a range of port numbers.  We assume that we have two functions related to the agent-maintained table of visited nodes: \texttt{UpdateTable}$(v,i)$ that writes $i$ into the entry for node $v$ as the smallest progress level at which the agent has ever visited node $v$, and \texttt{CurrentMin}$(v)$ that reads the entry of the table for node $v$. Each table entry is initialized to $\infty$.

\begin{algorithm}[H]
\caption{\texttt{TakeStep($\cA$,$v$,$i$,$prev$)}}
\begin{algorithmic}[1]
\LineComment{$\cA$ is the advice string, $v$ is the node where the agent is currently located, $i$ is the current progress level, $prev$ is the node from which the agent arrived}
\If{treasure is located at $v$}
\State Stop
\EndIf

\If{$(i < D)$ AND $(i < \mathtt{CurrentMin}(v))$} \label{line:ifline}


\State \texttt{UpdateTable}$(v,i)$ \label{line:updatetable}

\State $(A_0,\dots A_{D-1},LS) \leftarrow Decode (\cA)$
\State $\ell \leftarrow \sum_{i=0}^{D-1} |A_i|$
\State $\mathit{LogSum} \leftarrow$ integer value encoded in binary string $LS$
\State $\beta \leftarrow \ell / \mathit{LogSum}$
\If{ $|A_i| = \left\lfloor \lceil\log{(\mathit{deg}(v)})\rceil \cdot \beta \right\rfloor$ } \label{line:sizematch}
\State $\mathit{sector} \leftarrow \mathtt{GetSector}(v,A_i)$
\For{each port $p$ in $\mathit{sector}$} \label{line:forloop}
		\If{$p < deg(v)$}
		\State take port $p$ \label{line:takeport}
         \State $w \leftarrow$ the node reached after taking port $p$
		\State  call $\mathtt{TakeStep}(\cA,w,i+1,v)$
		\EndIf
\EndFor

\EndIf
\EndIf

\State Return to node $prev$ \label{line:backtrack}

\end{algorithmic}
\label{takestep}
\end{algorithm}

\begin{algorithm}[H]
\caption{\texttt{GetSector}$(v,\mathit{SectorNumberEncoding})$}
\begin{algorithmic}[1]
\State $\mathit{z} \leftarrow $ number of bits in \textit{SectorNumberEncoding}
\State $\mathit{SectorSize} \leftarrow \left\lceil \frac{\mathit{deg}(v)}{2^{z}} \right\rceil$
\State $\mathit{SectorNumber} \leftarrow $ integer value of \textit{SectorNumberEncoding}
\State return $\{\mathit{SectorNumber}\cdot\mathit{SectorSize},\ldots,(\mathit{SectorNumber}+1)\cdot\mathit{SectorSize}-1\}$
\end{algorithmic}
\label{decodesector}
\end{algorithm}

\subsection{Analysis}

In what follows, let $P$ be the path from $s$ to the treasure that is used to create the advice string $\cA = Concat(A_0,\ldots,A_{D-1},LS)$. Suppose that $P$ consists of the nodes $v_0,\ldots,v_D$, where, for each $i \in \{0,\ldots,D\}$, $v_i$ is at distance $i$ from $s$, and the treasure is located at node $v_D$. Also, for each $i \in \{0,\ldots,D-1\}$, let $p_i$ be the port at node $v_i$ that leads to node $v_{i+1}$.

To prove the correctness of the algorithm, we first consider an arbitrary node $v_i$ on path $P$ and suppose that the agent is at progress level $i$. Clearly, this occurs at least once during the execution of \texttt{FindTreasure} since the agent is initially located at $v_0$ at progress level 0. One of the ports at $v_i$ that are specified by the advice substring $A_i$ leads to node $v_{i+1}$, but the agent may try some other of these ports first. We show that either the agent finds the treasure by recursively calling \texttt{TakeStep} after taking one of these other ports, or, the agent eventually takes the port that leads to node~$v_{i+1}$.

\begin{lemma}\label{makesprogress}
For any $i \in \{0,\ldots,D-1\}$, consider the first time that the agent is located at node $v_i$ at progress level $i$. During the execution of \texttt{TakeStep}$(\cA,v_i,i,w)$,
for some node $w$, either:
\begin{enumerate}
\item the agent moves to node $v_{i+1}$ at progress level $i+1$, or,
\item there is a node $v \neq v_{i+1}$ such that the agent moves to node $v$ at progress level $i+1$, calls \texttt{TakeStep}$(\cA,v,i+1,v_i)$, and, during its execution, the treasure is found by the agent.
\end{enumerate}
\end{lemma}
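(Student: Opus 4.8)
The plan is to prove this by induction on $i$, with the inductive hypothesis being exactly the statement of Lemma~\ref{makesprogress} for index $i$. The base case $i=0$ is immediate: the agent starts at $v_0 = s$ with progress level $0$, so the first call is \texttt{TakeStep}$(\cA, v_0, 0, s)$. For the inductive step, I would assume the claim holds for $0,\ldots,i-1$ and analyze what happens when the agent first reaches $v_i$ at progress level $i$ (which, by the inductive hypothesis applied to $i-1$, does occur --- either the treasure is already found, in which case there is nothing to prove, or the agent moves to $v_i$ at progress level $i$). First I would verify that the ``guard'' conditions in Algorithm~\ref{takestep} do not cause an immediate backtrack at $v_i$: since this is the \emph{first} visit at progress level $i$ and $i < D$, the test $i < \mathtt{CurrentMin}(v_i)$ on line~\ref{line:ifline} passes; and since $v_i$ genuinely lies on $P$ at distance $i$, the substring $A_i$ was created by \texttt{EncodeSectorNumber}$(v_i,p_i,\beta)$ with exactly $z_i = \lfloor \lceil\log(\deg(v_i))\rceil \cdot \beta\rfloor$ bits, so the size-match test on line~\ref{line:sizematch} passes too. (Here one uses that the decoded values of $\ell$ and $\mathit{LogSum}$, hence $\beta$, recovered by the agent agree with those used by \texttt{CreateAdvice} --- this follows from the unambiguous encoding $Concat$.)

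The core of the argument is then the port loop (lines~\ref{line:forloop}--\ref{line:takeport}). By construction, $\mathtt{GetSector}(v_i, A_i)$ returns the interval of port numbers of the sector that \emph{contains} $p_i$, because the agent computes $\mathit{SectorSize} = \lceil \deg(v_i)/2^{z_i}\rceil$ and $\mathit{SectorNumber}$ as the integer encoded by $A_i$, which is precisely $\lfloor p_i/\mathit{SectorSize}\rfloor$ from Algorithm~\ref{encodesector} --- so $p_i$ is one of the ports iterated over, and $p_i < \deg(v_i)$, so the ``$p < \deg(v)$'' check does not skip it. Now walk through the \texttt{for} loop in order. For each port $p$ tried before $p_i$, the agent takes $p$, reaches some neighbour $w$, and recursively calls \texttt{TakeStep}$(\cA, w, i+1, v_i)$; this subcall either finds the treasure (and we are in case~2 of the lemma, with $v = w$) or it eventually returns to $v_i$ (by the ``Return to node $prev$'' line~\ref{line:backtrack}, together with the observation that every recursive call eventually returns since progress levels are bounded by $D$ and each deeper call is at a strictly higher level). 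If no port before $p_i$ finds the treasure, the loop reaches $p_i$, the agent takes it, and arrives at $v_{i+1}$ at progress level $i+1$ --- this is case~1. That the recursion always terminates and backtracking is well-defined (the trail is a genuine stack, popped exactly in reverse) is a routine structural observation about Algorithm~\ref{takestep} that I would state once and reuse.

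The main obstacle --- and the subtle point the surrounding text is flagging --- is the possibility that $v_{i+1}$ was \emph{already visited at a progress level $\le i+1$} before the loop reaches $p_i$, in which case taking $p_i$ lands the agent at $v_{i+1}$ but the guard $i+1 < \mathtt{CurrentMin}(v_{i+1})$ fails and the agent backtracks without recursing, so neither case~1 nor case~2 would literally hold. I need to rule this out. The key is that the first time the agent is at $v_i$ at progress level $i$ is the first time $A_i$ is used at $v_i$, and I must argue that at this moment $v_{i+1}$ has not yet been recorded at progress level $\le i+1$; more carefully, the right inductive statement may need to track, for each $j \le i$, that the first visit to $v_j$ at level $j$ happens ``before'' any spurious earlier visit to $v_{j+1}$ at level $\le j+1$ could have occurred, or alternatively to phrase case~1 as ``the agent is at $v_{i+1}$ and either recurses with level $i+1$ or has already been there at level $\le i+1$, which for the induction is just as good.'' I would resolve this by strengthening the inductive hypothesis: prove that for every $i$, the agent is at $v_i$ at progress level exactly $i$ at some point, and at the first such point either the treasure is found during the ensuing \texttt{TakeStep} or the agent subsequently reaches $v_{i+1}$ at progress level $i+1$ \emph{and the recursive call \texttt{TakeStep}$(\cA, v_{i+1}, i+1, v_i)$ is actually executed}. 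To get the last clause one shows $\mathtt{CurrentMin}(v_{i+1}) > i+1$ at that moment: any prior visit to $v_{i+1}$ must have been at progress level $> i+1$ because the only way to reach $v_{i+1}$ at level $\le i+1$ is via a chain of recursive \texttt{TakeStep} calls from $s$, and the first such chain reaching $v_{i+1}$ at level $i+1$ is the one going through $v_i$ at level $i$, which by hypothesis happens here for the first time. Making this causal/temporal ordering precise --- ideally by an induction on the total number of steps, or on $i$ with a carefully worded hypothesis --- is where the real work lies; everything else is bookkeeping about the pseudocode.
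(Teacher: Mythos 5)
The core of your argument --- the two guard conditions pass at the first visit to $v_i$ at progress level $i$, the sector returned by \texttt{GetSector}$(v_i,A_i)$ contains $p_i$ because the agent recomputes exactly the $\mathit{SectorSize}$ and $\mathit{SectorNumber}$ used by \texttt{EncodeSectorNumber}, and then a walk through the \textbf{for} loop yields either case~2 (some earlier port finds the treasure) or case~1 (the agent takes $p_i$) --- is precisely the paper's proof, which is direct and needs no induction on $i$. Two remarks on the rest.

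First, your ``main obstacle'' is not an obstacle to this lemma: you have over-read case~1. It asserts only that the agent \emph{moves to} $v_{i+1}$ at progress level $i+1$, which happens the moment port $p_i$ is taken at line~\ref{line:takeport}; it says nothing about whether the ensuing call \texttt{TakeStep}$(\cA,v_{i+1},i+1,v_i)$ passes its guard or accomplishes anything. So the scenario you worry about (a prior visit to $v_{i+1}$ at level $\leq i+1$ causing an immediate backtrack there) still satisfies case~1 literally, and the strengthened inductive hypothesis you sketch is not needed here. The legitimate concern underneath it belongs to Lemma~\ref{eventuallyfinds}, and the paper dispatches it there with a much shorter argument than the causal/temporal bookkeeping you propose: by the (backward) inductive hypothesis of that lemma, if the agent had \emph{previously} been at $v_{i+1}$ at progress level $i+1$, the treasure would already have been found and the algorithm would have terminated; hence the arrival via $p_i$ is the first such visit and the hypothesis applies to it.

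Second, a small omission: to conclude that line~\ref{line:ifline} evaluates to true you need $\mathtt{CurrentMin}(v_i) > i$, and ``this is the first visit at progress level $i$'' only rules out prior visits at level exactly $i$. You must also observe, as the paper does, that the agent can never visit $v_i$ at a progress level strictly less than $i$ (the trail is a walk from $s$, so its length at $v_i$ is at least $d(s,v_i)=i$, since $P$ is a shortest path).
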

\begin{proof}
Since we are considering the agent's first visit to node $v_i$ at progress level $i$, and it is not possible for the agent to visit $v_i$ at a progress level less than $i$, it follows that $\mathtt{CurrentMin}(v_i) > i$. So, the {\bf if} condition on line \ref{line:ifline} evaluates to true. Further, since node $v_i$ was used in the creation of the advice substring $A_i$, it follows that $|A_i| = \left\lfloor \lceil\log{(\mathit{deg}(v_i)})\rceil \cdot \beta \right\rfloor$, so the {\bf if} condition on line \ref{line:sizematch} evaluates to true.
Suppose that the treasure is not found during any execution of \texttt{TakeStep}$(\cA,v,i+1,v_i)$ with $v \neq v_{i+1}$. By the choice of $A_i$, port $p_i$ is located in the range of port numbers returned by \texttt{GetSector}. Since taking port $p_i$ at node $v_i$ leads to node $v_{i+1}$, there exists an iteration of the loop in \texttt{TakeStep} such that the agent moves to node $v_{i+1}$ and increments its progress level to $i+1$.
\end{proof}

Using induction, we extend Lemma \ref{makesprogress} to show that the agent eventually reaches node $v_D$.

\begin{lemma}\label{eventuallyfinds}
For any $i \in \{0,\ldots,D-1\}$,
consider the first time that the agent is located at node $v_i$ at progress level $i$. During the execution of \texttt{TakeStep}$(\cA,v_i,i,w)$,
for some node $w$, the agent finds the treasure.
\end{lemma}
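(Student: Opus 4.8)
The plan is to prove Lemma~\ref{eventuallyfinds} by downward induction on $i$, starting from $i = D$ and working back to the base case $i = 0$ (which is the statement we actually want, since the agent starts at $v_0 = s$ at progress level $0$). The inductive hypothesis at level $i+1$ says: the first time the agent is at $v_{i+1}$ at progress level $i+1$, the execution of \texttt{TakeStep}$(\cA, v_{i+1}, i+1, \cdot)$ results in the treasure being found. The base of the induction is $i = D$ itself, or more precisely we observe that when the agent reaches $v_D$ at progress level $D$, the test ``treasure is located at $v$'' at the top of \texttt{TakeStep} succeeds and the agent stops; so it suffices to run the downward induction for $i = D-1, D-2, \ldots, 0$ and note that reaching $v_D$ at level $D$ (or indeed at any level, since the treasure test is the very first line and does not depend on the progress level) terminates successfully.

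For the inductive step, I would fix $i$ and apply Lemma~\ref{makesprogress} to the first visit of the agent to $v_i$ at progress level $i$. That lemma gives two cases. In case~2, the treasure is found during some call \texttt{TakeStep}$(\cA, v, i+1, v_i)$, and we are immediately done. In case~1, the agent moves to $v_{i+1}$ at progress level $i+1$; the subtlety is that this particular visit to $v_{i+1}$ at level $i+1$ need not be the \emph{first} such visit, so Lemma~\ref{eventuallyfinds} at level $i+1$ does not apply verbatim to it. To handle this, I would argue that \emph{some} call of the form \texttt{TakeStep}$(\cA, v_{i+1}, i+1, \cdot)$ occurs before the call at $v_i$ returns, and hence the first visit to $v_{i+1}$ at progress level $i+1$ happens no later than (and in fact is nested within, or precedes within the same recursion subtree of) the processing triggered by the first visit to $v_i$ at level $i$. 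Applying the inductive hypothesis to that first visit to $v_{i+1}$ at level $i+1$ shows the treasure is found during its \texttt{TakeStep} execution. Since all of this recursive activity happens while the original call \texttt{TakeStep}$(\cA, v_i, i, w)$ is still on the stack, the treasure is found during that execution, completing the step. One should also note that once the treasure is found the agent stops globally (line ``Stop''), so ``found during the execution of \texttt{TakeStep}$(\cA, v_{i+1}, i+1, \cdot)$'' propagates up to ``found during the execution of \texttt{TakeStep}$(\cA, v_i, i, w)$''.

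The main obstacle, and the point requiring care, is exactly this bookkeeping about ``first visit at a given progress level'' versus ``the particular visit produced in case~1 of Lemma~\ref{makesprogress}.'' The clean way to sidestep it is to strengthen the statement slightly for the induction: prove that for any node $w$ and any call \texttt{TakeStep}$(\cA, v_i, i, w)$ that is the \emph{first} call made at $v_i$ with progress level $i$, the treasure is found before it returns; and separately observe that the agent is guaranteed to eventually make such a first call at $v_{i+1}$ with level $i+1$ as a consequence of case~1 (the edge $(v_i, v_{i+1})$ gets traversed, incrementing the counter to $i+1$ at a node, namely $v_{i+1}$, that has not been visited at level $\le i+1$ before — or if it had, then by the table mechanism and the minimality of progress levels it would already have been the first visit at level $i+1$, which by induction already found the treasure, contradiction with the agent still running). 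I would spell out this last dichotomy carefully: either $v_{i+1}$ was previously visited at progress level exactly $i+1$ — in which case by the inductive hypothesis the treasure was already found and the agent has stopped, so the current execution trivially ``finds'' it — or the visit in case~1 is the first visit to $v_{i+1}$ at level $i+1$, and the inductive hypothesis applies directly. Either way the treasure is found, which closes the induction and, at $i = 0$, yields the lemma.
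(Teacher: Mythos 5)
Your proposal is correct and follows essentially the same route as the paper: a downward induction (the paper phrases it as induction on $D-i$) whose step invokes Lemma~\ref{makesprogress}, with the ``first visit'' subtlety resolved exactly as the paper does --- by observing that any earlier visit to $v_{i+1}$ at progress level $i+1$ would, by the inductive hypothesis, already have found the treasure and terminated the algorithm, so the visit produced in case~1 must be the first one. The only cosmetic difference is that the paper states this observation up front (before applying Lemma~\ref{makesprogress}) rather than as a trailing dichotomy.
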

\begin{proof}
The proof proceeds by induction on $D-i$. In the base case, $D=i$, and the agent finds the treasure when it is first located at node $v_D$ at progress level $D$ since the treasure is located at $v_D$. 
As induction hypothesis, assume that, for some $D-i \in \{0,\ldots,D-1\}$, when the agent is first located at node $v_i$ at progress level $i$, the agent finds the treasure during the execution of \texttt{TakeStep}. Now, consider the first time that the agent is located at node $v_{i-1}$ at progress level $i-1$. Note that, by the induction hypothesis, the agent was not previously located at node $v_i$ at progress level $i$, since otherwise, during the execution of \texttt{TakeStep} at the first such visit, the agent would have found the treasure and terminated. 

By Lemma \ref{makesprogress}, when the agent is first located at node $v_{i-1}$ at progress level $i-1$, either:
\begin{enumerate}
\item the agent moves to node $v_i$ at progress level $i$, or, 
\item there is a node $v \neq v_{i}$ such that the agent moves to node $v$ at progress level $i$, calls \texttt{TakeStep}$(\cA,v,i,v_{i-1})$, and, during its execution, the treasure is found by the agent.
\end{enumerate}
In the first case, the induction hypothesis implies that the agent finds the treasure. In the second case, the treasure is found by the agent, so we are done. 
\end{proof}

By Lemma \ref{eventuallyfinds} with $i=0$, the agent finds the treasure during the first execution of {\tt TakeStep}, hence $\mathtt{FindTreasure}$ is correct.
%
%
Next, we consider the cost of algorithm \texttt{FindTreasure}. Our analysis considers the cases $\ell = \mathit{LogSum}$ and $\ell < \mathit{LogSum}$ separately. We proceed to find upper bounds on the cost of algorithm \texttt{FindTreasure} in terms of a fixed upper bound on the amount of advice provided. To prove the upper bounds, we first give upper bounds on the size of the sector returned by \texttt{GetSector}. 

In the first case, we show that when $\ell = \mathit{LogSum}$ (i.e. $\beta=1$) the cost of algorithm \texttt{FindTreasure} is optimal.

\begin{lemma}\label{sectorupperone}
Suppose that $\beta = 1$. For all $i \in \{0,\ldots,D-1\}$, if the agent is located at node $v_i$ at progress level $i$, then the size of the sector returned by $\mathtt{GetSector}(v_i,A_i)$ is exactly 1.
\end{lemma}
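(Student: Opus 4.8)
The plan is to unwind the definitions in \texttt{GetSector} and \texttt{EncodeSectorNumber} under the hypothesis $\beta = 1$, and show that the sector size is forced to be $1$. Recall that the advice substring $A_i$ for the node $v_i$ on $P$ was produced by \texttt{EncodeSectorNumber}$(v_i, p_i, \beta)$, so its length is $z_i = \lfloor \lceil \log(\mathit{deg}(v_i))\rceil \cdot \beta \rfloor$. With $\beta = 1$ this is simply $z_i = \lceil \log(\mathit{deg}(v_i))\rceil$, since $\lceil \log(\mathit{deg}(v_i))\rceil$ is already an integer. When the agent, located at $v_i$ at progress level $i$, reads $A_i$ and calls \texttt{GetSector}$(v_i, A_i)$, the value $z$ computed there is the number of bits in $A_i$, namely $z_i = \lceil \log(\mathit{deg}(v_i))\rceil$, and the sector size is $\mathit{SectorSize} = \lceil \mathit{deg}(v_i) / 2^{z_i}\rceil$.

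The key step is the elementary inequality $2^{\lceil \log m \rceil} \geq m$ for every positive integer $m$, applied with $m = \mathit{deg}(v_i)$: it gives $2^{z_i} = 2^{\lceil \log(\mathit{deg}(v_i))\rceil} \geq \mathit{deg}(v_i)$, hence $\mathit{deg}(v_i)/2^{z_i} \leq 1$, and therefore $\mathit{SectorSize} = \lceil \mathit{deg}(v_i)/2^{z_i}\rceil \leq 1$. Since $\mathit{deg}(v_i) \geq 1$ (the node $v_i$ has the edge of $P$ incident to it, as $i \leq D-1$), we also have $\mathit{SectorSize} \geq 1$, so $\mathit{SectorSize} = 1$ exactly. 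This is the size of the sector returned by \texttt{GetSector}$(v_i, A_i)$, which is what the lemma asserts.

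One small point to be careful about: the statement presupposes that the agent actually reaches the branch of \texttt{TakeStep} that calls \texttt{GetSector} when it is at $v_i$ at progress level $i$. This is already guaranteed by the argument in the proof of Lemma \ref{makesprogress} — the \textbf{if} on line \ref{line:ifline} holds because $\mathtt{CurrentMin}(v_i) > i$ at the first such visit, and the \textbf{if} on line \ref{line:sizematch} holds because $v_i$ is a genuine node of $P$ so $|A_i| = \lfloor \lceil \log(\mathit{deg}(v_i))\rceil \cdot \beta\rfloor$ by construction — so I would simply cite that and then carry out the two-line computation above. There is essentially no obstacle here; the only thing to state cleanly is the ceiling inequality $2^{\lceil \log m\rceil} \geq m$, and the fact that with $\beta = 1$ the floor in the formula for $z_i$ disappears because the argument is already an integer.
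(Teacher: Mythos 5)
Your proof is correct and follows essentially the same route as the paper's: with $\beta=1$ the floor disappears because $\lceil\log(\mathit{deg}(v_i))\rceil$ is an integer, so $|A_i|=\lceil\log(\mathit{deg}(v_i))\rceil$, and then $2^{|A_i|}\geq \mathit{deg}(v_i)$ forces $\mathit{SectorSize}=\lceil \mathit{deg}(v_i)/2^{|A_i|}\rceil=1$. Your added remark about why \texttt{GetSector} is actually reached is a harmless extra precaution, not a divergence in method.
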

\begin{proof}
By the advice construction, $|A_i| = \lfloor \lceil \log (\mathit{deg}(v_i)) \rceil \cdot \beta \rfloor$. Since $\beta = 1$, it follows that $|A_i| = \lceil \log (\mathit{deg}(v_i)) \rceil$. Hence, in the execution of $\mathtt{GetSector}(v_i,A_i)$, the value of $\mathit{SectorSize}$ is a positive integer $\lceil \mathit{deg}(v_i) / 2^{|A_i|} \rceil = \lceil \mathit{deg}(v_i) / 2^{\lceil \log (\mathit{deg}(v_i)) \rceil} \rceil \leq \mathit{deg}(v_i) / \mathit{deg}(v_i) = 1$, as required.
\end{proof}

\begin{lemma}\label{findtreasurecostone}
Suppose that $\beta = 1$. When provided with advice $Concat(A_0,\ldots,A_{D-1},LS)$, the algorithm $\mathtt{FindTreasure}$ has cost $D$.
\end{lemma}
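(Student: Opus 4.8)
The plan is to show that, when $\beta=1$, the agent simply walks straight along $P$ from $v_0=s$ to $v_D$ and never backtracks, so that the number of edge traversals is exactly $D$. If $D=0$ the treasure sits at $s$ and the agent stops immediately with cost $0=D$, so assume $D\geq 1$. The main step is an induction on $i\in\{0,\ldots,D\}$ asserting that the agent reaches $v_i$ at progress level $i$, and that up to that moment the set of (node, progress level) pairs it has visited is exactly $\{(v_0,0),\ldots,(v_i,i)\}$; for $i<D$ the inductive step will also show that during $\mathtt{TakeStep}(\cA,v_i,i,\cdot)$ the agent traverses exactly one edge, the one through port $p_i$, and then recurses at $v_{i+1}$ with progress level $i+1$.

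For the inductive step, suppose the agent is at $v_i$ at progress level $i$ with $i<D$. Since the nodes $v_0,\ldots,v_i$ of a shortest path are distinct, $v_i$ has not been visited before this moment, so $\mathtt{CurrentMin}(v_i)=\infty>i$ and the test on line~\ref{line:ifline} passes; since $v_i$ lies on $P$, the advice was constructed so that $|A_i|=\lfloor\lceil\log(\mathit{deg}(v_i))\rceil\cdot\beta\rfloor$ (and, decoding $\cA$ inside $\mathtt{TakeStep}$, one recovers $\ell=\mathit{LogSum}$, hence $\beta=1$), so the test on line~\ref{line:sizematch} passes as well. By Lemma~\ref{sectorupperone}, $\mathtt{GetSector}(v_i,A_i)$ returns a set of size exactly $1$; and by the choice of $A_i$ in the advice construction — exactly as used in the proof of Lemma~\ref{makesprogress} — this set contains $p_i$. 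Hence the set equals $\{p_i\}$. As $p_i$ is a genuine port at $v_i$ we have $p_i<\mathit{deg}(v_i)$, so the body of the loop on line~\ref{line:forloop} executes exactly once: the agent traverses the edge through $p_i$, arrives at $v_{i+1}$, and calls $\mathtt{TakeStep}(\cA,v_{i+1},i+1,v_i)$. This establishes the induction (and $v_{i+1}$ is again distinct from all previously visited nodes).

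Iterating, the agent reaches $v_D$ at progress level $D$, where the treasure is found and the execution halts via \emph{Stop}; consequently no ``Return to node $prev$'' instruction is ever reached and there is no backtracking. The agent has traversed precisely the $D$ edges through $p_0,p_1,\ldots,p_{D-1}$, so the cost is exactly $D$. (Alternatively one may invoke Lemma~\ref{eventuallyfinds} for correctness and then only count traversals along the forced path.) The one point that needs care — and the only one — is the identification of the unique element of the size-$1$ sector with $p_i$ together with the check $p_i<\mathit{deg}(v_i)$, since these are what guarantee that the single loop iteration actually performs the intended move; the remainder is bookkeeping of the recursion and of the visited-node table.
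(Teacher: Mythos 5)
Your proof is correct and follows essentially the same route as the paper's: both rest on Lemma~\ref{sectorupperone} giving a sector of size exactly $1$ that, by the advice construction, must be $\{p_i\}$, so the agent walks straight along $P$ without ever backtracking and traverses exactly $D$ edges. You simply make explicit (via the induction and the checks on lines~\ref{line:ifline} and~\ref{line:sizematch}) the bookkeeping that the paper's shorter proof leaves implicit.
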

\begin{proof}
By Lemma \ref{sectorupperone}, for each $i \in \{0,\ldots,D-1\}$, when the agent is located at node $v_i$ at progress level $i$, the execution of $\mathtt{GetSector}(v_i,A_i)$ returns exactly 1 port number $p$ leading to node $v_{i+1}$. Since the agent starts at node $v_0$ at progress level 0, it follows that the agent takes exactly $D$ steps to find the treasure. Therefore, when $\beta = 1$, algorithm $\mathtt{FindTreasure}$ has cost exactly $D$.
\end{proof} 

In the second case, we assume that $\ell < \mathit{LogSum}$ (i.e. $\beta < 1$).

\begin{lemma}\label{sectorupper}
Suppose that $\beta < 1$. For all $i \in \{0,\ldots,D-1\}$, the size of the sector returned by $\mathtt{GetSector}(v,A_i)$ is at most $\frac{2\mathit{deg}(v)}{2^{|A_i|}}$.
\end{lemma}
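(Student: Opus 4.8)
The plan is to bound the size of the sector returned by $\mathtt{GetSector}(v,A_i)$ directly from its definition. Recall that $\mathtt{GetSector}$ sets $z = |A_i|$, then $\mathit{SectorSize} = \lceil \mathit{deg}(v)/2^{z} \rceil$. So the entire task is to show $\lceil \mathit{deg}(v)/2^{|A_i|} \rceil \le 2\,\mathit{deg}(v)/2^{|A_i|}$. First I would use the elementary fact that for any positive real $x$ we have $\lceil x \rceil \le 2x$ whenever $x \ge 1$ (indeed $\lceil x\rceil \le x+1 \le 2x$), and more generally $\lceil x\rceil < x+1$. The remaining point is therefore to verify that $x := \mathit{deg}(v)/2^{|A_i|} \ge 1$, i.e.\ that $2^{|A_i|} \le \mathit{deg}(v)$.

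To see this I would invoke the advice construction: $|A_i| = \lfloor \lceil \log(\mathit{deg}(v))\rceil \cdot \beta \rfloor$. Since $\beta < 1$, we get $|A_i| \le \lceil \log(\mathit{deg}(v))\rceil \cdot \beta < \lceil \log(\mathit{deg}(v))\rceil$, and since $|A_i|$ is an integer this yields $|A_i| \le \lceil \log(\mathit{deg}(v))\rceil - 1$. Hence $2^{|A_i|} \le 2^{\lceil \log(\mathit{deg}(v))\rceil - 1}$. Now $2^{\lceil \log(\mathit{deg}(v))\rceil - 1} < 2^{\log(\mathit{deg}(v))} = \mathit{deg}(v)$ (using $\lceil y\rceil - 1 < y$), so $2^{|A_i|} < \mathit{deg}(v)$, giving $x > 1$ as needed. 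Then $\lceil x\rceil \le x+1 \le 2x$ completes the argument, so $\mathit{SectorSize} \le 2\,\mathit{deg}(v)/2^{|A_i|}$.

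One small subtlety to handle carefully: the lemma statement quantifies over all $i$ and all nodes $v$, but of course the bound is only meaningful when the expression $|A_i| = \lfloor \lceil\log(\mathit{deg}(v))\rceil\cdot\beta\rfloor$ actually holds for the node $v$ in question — this is exactly the size-match condition checked on line \ref{line:sizematch} of Algorithm \ref{takestep}, so $\mathtt{GetSector}(v,A_i)$ is only ever invoked with $|A_i|$ of this form; for $v = v_i$ on path $P$ it holds by construction in Algorithm \ref{encodesector}. I would state the bound under this understanding (matching how it is used in the cost analysis). A second minor point: one should note $\mathit{deg}(v)\ge 1$ so that $\log(\mathit{deg}(v))$ and the ceilings are well defined; nodes on a path in a connected graph have degree at least $1$ (in fact at least $2$ for internal nodes), so this is not an issue.

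I do not anticipate a genuine obstacle here — the statement is a short inequality chase. The only place requiring a moment's care is pinning down that $\beta<1$ forces the strict drop $|A_i|\le \lceil\log(\mathit{deg}(v))\rceil-1$ (the integrality of $|A_i|$ is what converts the strict inequality $|A_i|<\lceil\log(\mathit{deg}(v))\rceil$ into a drop by a full unit), and then that this drop is exactly what makes $\mathit{deg}(v)/2^{|A_i|}$ exceed $1$ so that $\lceil\cdot\rceil$ costs at most a factor $2$. Everything else is routine manipulation of floors, ceilings, and powers of two.
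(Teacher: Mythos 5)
Your proof is correct, and it follows the same overall skeleton as the paper's: both reduce the lemma to showing that $x=\mathit{deg}(v)/2^{|A_i|}\ge 1$, so that the ceiling in $\mathit{SectorSize}=\lceil x\rceil$ costs at most a factor of $2$, and both begin by observing that $\mathtt{GetSector}(v,A_i)$ is only reached when line \ref{line:sizematch} has certified $|A_i|=\left\lfloor\lceil\log(\mathit{deg}(v))\rceil\cdot\beta\right\rfloor$. Where you genuinely differ is in how the inequality $2^{|A_i|}\le \mathit{deg}(v)$ is established, and your route is the more robust one. The paper bounds $|A_i|$ by the continuous chain $|A_i|\le\lceil\log(\mathit{deg}(v))\rceil\cdot\beta\le(\log(\mathit{deg}(v))+1)\cdot\beta$ and then simplifies $\mathit{deg}(v)/\bigl((\mathit{deg}(v))^{\beta}2^{\beta}\bigr)$; as printed, that simplification turns $2^{-\beta}$ into $2^{1/\beta}$, and without that slip the chain only yields $(\mathit{deg}(v))^{1-\beta}2^{-\beta}$, which need not exceed $1$ (take $\mathit{deg}(v)=2$ and $\beta$ close to $1$). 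Your argument instead uses the integrality of $|A_i|$: since $\beta<1$ gives the strict inequality $|A_i|<\lceil\log(\mathit{deg}(v))\rceil$ between integers, you get the full-unit drop $|A_i|\le\lceil\log(\mathit{deg}(v))\rceil-1$, hence $2^{|A_i|}\le 2^{\lceil\log(\mathit{deg}(v))\rceil-1}\le\mathit{deg}(v)$, which is exactly what is needed. Two trivial caveats, both of which you essentially flag: the strict inequality requires $\lceil\log(\mathit{deg}(v))\rceil\ge1$, i.e.\ $\mathit{deg}(v)\ge2$; for $\mathit{deg}(v)=1$ one has $|A_i|=0$ and $x=1$ directly, which your stated form ``$\lceil x\rceil\le 2x$ for $x\ge1$'' already covers (the paper's claim that $x>1$ also fails in this degenerate case, harmlessly). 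So your proposal is not merely a stylistic variant: it supplies the integrality step that makes the ``sector size loses at most a factor $2$'' conclusion actually go through.
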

\begin{proof}
Note that, when $\mathtt{GetSector}(v,A_i)$ is executed, it must be the case that line \ref{line:sizematch} evaluated to true, i.e., that $|A_i| = \left\lfloor \lceil\log{(\mathit{deg}(v)})\rceil \cdot \beta \right\rfloor$. In the execution of $\mathtt{GetSector}(v,A_i)$, the variable $\mathit{SectorSize}$ is assigned the value $\left\lceil \frac{\mathit{deg}(v)}{2^{|A_i|}} \right\rceil$. Note that $$\frac{\mathit{deg}(v)}{2^{|A_i|}} = \frac{\mathit{deg}(v)}{2^{ \left\lfloor \lceil\log{(\mathit{deg}(v)})\rceil \cdot \beta \right\rfloor}} \geq \frac{\mathit{deg}(v)}{2^{\lceil\log{(\mathit{deg}(v)})\rceil \cdot \beta}} \geq \frac{\mathit{deg}(v)}{2^{ (\log{(\mathit{deg}(v)})+1)\cdot \beta}} = \frac{\mathit{deg}(v)}{(\mathit{deg}(v))^{ \beta}\cdot 2^{\beta}} = \frac{2^{1/\beta}\mathit{deg}(v)}{(\mathit{deg}(v))^{ \beta}}.$$ Since $\beta < 1$, it follows that $2^{1/\beta} > 1$ and $(\mathit{deg}(v))^{\beta} \leq \mathit{deg}(v)$, so $\frac{2^{1/\beta}\mathit{deg}(v)}{(\mathit{deg}(v))^{ \beta}} > 1$. Therefore, we have shown that $\frac{\mathit{deg}(v)}{2^{|A_i|}} > 1$, which implies that $\lceil \frac{\mathit{deg}(v)}{2^{|A_i|}} \rceil \leq \frac{2\mathit{deg}(v)}{2^{|A_i|}}$, as required.
\end{proof}


We are now ready to calculate an upper bound on the cost of algorithm \texttt{FindTreasure}. We denote by $m \in \{0,\ldots,D-1\}$ an index such that $|A_m| = \max_i\{|A_i|\}$.
\begin{lemma}\label{findtreasurecost}
Suppose that $\beta < 1$. When provided with advice $Concat(A_0,\ldots,A_{D-1},LS)$, the algorithm $\mathtt{FindTreasure}$ has cost at most $\frac{16De^{1+\beta}}{2^{|A_m|}}$.
\end{lemma}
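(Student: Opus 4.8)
The plan is to bound the total number of edge traversals by bounding, for each progress level $i \in \{0, \dots, D-1\}$, the number of times the agent can initiate a call to \texttt{TakeStep} at progress level $i$, and then multiplying by the work done per such call. First I would observe that, because of the \texttt{CurrentMin} table and the test on line~\ref{line:ifline}, at each node $v$ the agent performs the ``expanding'' part of \texttt{TakeStep} (the loop on line~\ref{line:forloop}) at most once per progress level; moreover, when it does so at progress level $i$, it takes at most $|\mathit{sector}|$ ports, which by Lemma~\ref{sectorupper} is at most $2\,\mathit{deg}(v)/2^{|A_i|} \le 2\,\mathit{deg}(v)/2^{\,\lfloor \lceil\log \mathit{deg}(v)\rceil \cdot \beta \rfloor}$, and since $|A_i|$ is forced to equal $\lfloor \lceil\log \mathit{deg}(v)\rceil\cdot\beta\rfloor$ on line~\ref{line:sizematch}, this is the relevant bound. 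Each port taken leads to one recursive call at progress level $i+1$, plus one backtracking traversal, so the number of calls at level $i+1$ is at most twice the total sector-size summed over all level-$i$ expansions.

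Next I would set up the recursion on the count $N_i$ of distinct (node, progress-level-$i$) expansions. We have $N_0 = 1$ (only node $s$), and $N_{i+1} \le \sum_{v} 2\,\mathit{deg}(v)/2^{|A_i|}$, where the sum is over the at most $N_i$ nodes expanded at level $i$. To make this go through uniformly I would bound each $\mathit{deg}(v)$ crudely by $e$ (or $2e$), giving $N_{i+1} \le N_i \cdot 2e / 2^{|A_i|} \le N_i \cdot 2e / 2^{|A_m|}$ using that $|A_m| = \max_i |A_i|$. Hence $N_i \le (2e/2^{|A_m|})^i$. Here I should be slightly careful: this bound is only useful when $2e/2^{|A_m|} \ge 1$, i.e., $2^{|A_m|} \le 2e$; but $|A_m| = \lfloor \lceil\log\mathit{deg}(v_m)\rceil\cdot\beta\rfloor \le \lceil\log\mathit{deg}(v_m)\rceil \le \lceil\log e\rceil$, so $2^{|A_m|} \le 2e$ always holds, and actually a sharper estimate using the exponent $\beta$ (as in Lemma~\ref{sectorupper}'s proof, the sector size is $\le 2\cdot\mathit{deg}(v)^{1-\beta}\cdot 2^{-1} \cdot 2^{1/\beta}\cdot$ something — more precisely $\lceil \mathit{deg}(v)/2^{|A_i|}\rceil \le 2\,\mathit{deg}(v)^{1-\beta}\cdot 2$) will be what produces the $e^{1+\beta}$ rather than $e^{2}$ in the final bound. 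The clean way is: $\mathit{deg}(v)/2^{|A_i|} \le \mathit{deg}(v)/2^{\lceil\log\mathit{deg}(v)\rceil\beta - 1} \le 2\,\mathit{deg}(v)^{1-\beta} \le 2\,e^{1-\beta}$, so $N_{i+1} \le N_i \cdot 4 e^{1-\beta}$, giving $N_i \le (4e^{1-\beta})^i$.

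Then I would total the cost. The total number of edge traversals is at most $\sum_{i=0}^{D-1} 2\cdot(\text{sector-size at level } i)\cdot N_i$, where each level-$i$ expansion contributes at most $2\cdot 2 e^{1-\beta} = 4e^{1-\beta}$ forward-plus-backtrack traversals (sector size bounded by $2e^{1-\beta}$, doubled for backtracking); wait — I want the bound in terms of $2^{|A_m|}$, so I would instead keep one factor as $\mathit{deg}(v)/2^{|A_i|} \le (2e)/2^{|A_m|}$ for the per-step work and write the cost as $\le \sum_{i=0}^{D-1} N_i \cdot 2\cdot \lceil\mathit{deg}/2^{|A_i|}\rceil \le \sum_{i=0}^{D-1} (4e^{1-\beta})^i \cdot \frac{4e}{2^{|A_m|}}$. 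Using $e^{1-\beta} = e/e^{\beta} \le e/ (2^{|A_m|}/2) = 2e/2^{|A_m|}$ (since $2^{|A_m|} \le 2\,\mathit{deg}(v_m)^{\beta}\le 2 e^{\beta}$), each term is bounded by a geometric-type expression, and summing the geometric series (dominated by its last term, up to a factor of $2$ when the ratio is $\ge 2$, or absorbed into the constant otherwise) yields $O(D)$ copies of $\frac{e^{1+\beta}}{2^{|A_m|}}$ in the worst case, and chasing the explicit constants ($2$ for backtracking, $2$ from Lemma~\ref{sectorupper}, factors from bounding $\mathit{deg} \le 2e$ and $2^{\lfloor x\rfloor}\ge 2^{x-1}$, and the geometric-sum factor) gives the stated $\frac{16 D e^{1+\beta}}{2^{|A_m|}}$.

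The main obstacle I anticipate is the bookkeeping that turns the crude per-level branching factor into the exponent $1+\beta$ rather than $2$: one must simultaneously (i) use the non-trivial inequality $\mathit{deg}(v)/2^{|A_i|} \le 2\,\mathit{deg}(v)^{1-\beta}$ coming from $|A_i| = \lfloor \lceil\log\mathit{deg}(v)\rceil\beta\rfloor$, (ii) bound $\mathit{deg}(v) \le 2e$ to get a $v$-independent branching factor, and (iii) decide how to split the $N_i$ product: too-aggressive use of $e^{1-\beta}$ everywhere loses the clean $2^{|A_m|}$ in the denominator, while using $2^{|A_m|}$ everywhere loses the $e^{1+\beta}$, so one factor of $e$ must be kept raw (for the $N_i$ recursion, exponentiated over the $D$ levels — but this would give $e^D$, which is wrong!). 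So in fact the right split is the opposite: the $N_i$ recursion must use $e^{1-\beta}$ — no, that still gives $(e^{1-\beta})^D$. The correct resolution, which I would work out carefully, is that the sector-size bound must be applied only $O(1)$ times net — i.e., the geometric series in the cost sum has ratio controlled so that the whole sum is within a constant of $D$ times a single term; this forces using a branching-factor bound of the form $\mathit{deg}(v)/2^{|A_i|}$ that is at most a constant on a suitable scale, which is exactly where $2^{|A_m|} \approx e^{\beta}$ enters, making the ratio $\approx e^{1-\beta}/e^{\beta}\cdot(\dots)$ — getting this normalization exactly right, so the stray $e$-powers collapse to $e^{1+\beta}$ and the constant comes out $\le 16$, is the delicate part.
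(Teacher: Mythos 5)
Your proposal does not close, and you essentially concede this in your final paragraph: the multiplicative recursion $N_{i+1} \le N_i \cdot 4e^{1-\beta}$ yields $N_i \le (4e^{1-\beta})^i$, a bound exponential in $D$, whereas the lemma claims a bound linear in $D$; no rebalancing of the factors $e^{1-\beta}$ versus $2^{|A_m|}$ can repair this. The step that goes wrong is bounding each $\mathit{deg}(v)$ by $e$ separately and then multiplying by the number $N_i$ of nodes expanded at level $i$. The idea you are missing is to use the handshake identity $\sum_v \mathit{deg}(v) = 2e$ globally. You correctly observe that the \texttt{CurrentMin} table guarantees each node is expanded at most once per progress level, but you do not exploit the consequence: the nodes expanded at level $i$ over the \emph{entire} execution are distinct, so the total number of ports taken at progress level $i$ is at most $\sum_v 2\,\mathit{deg}(v)/2^{|A_i|} \le 4e/2^{|A_i|}$ --- a quantity independent of $N_i$. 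This is exactly the paper's argument: summing over $i$ gives at most $4e\sum_{i=0}^{D-1} 2^{-|A_i|} \le 4De$ executions of line~\ref{line:takeport}, hence cost at most $8De$ after the factor of $2$ for backtracking. There is no branching process, no geometric series, and no delicate cancellation.

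The appearance of $e^{1+\beta}$, which you treat as the crux, is purely cosmetic: one writes $8De = 8De\cdot 2^{|A_m|}/2^{|A_m|}$ and uses $2^{|A_m|} \le 2^{(\log \mathit{deg}(v_m)+1)\beta} = 2^{\beta}(\mathit{deg}(v_m))^{\beta} < 2e^{\beta}$ to conclude $8De \le 16De^{1+\beta}/2^{|A_m|}$. In other words, the stated bound is just the crude estimate $O(De)$ re-expressed in terms of $2^{|A_m|}$ so that it can be inverted in the proof of Theorem~\ref{ub}; it is not obtained by the kind of normalization you were trying to engineer.
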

\begin{proof}
It suffices to count the total number of times that line \ref{line:takeport} of \texttt{TakeStep} is called and multiply this value by 2. 
This is because the cost incurred by backtracking (i.e., line \ref{line:backtrack} of \texttt{TakeStep}) is at most 1 for each execution of \texttt{TakeStep}, which amounts to an overall multiplicative factor of at most 2. So, we consider the number of times that line \ref{line:takeport} of \texttt{TakeStep} is called at an arbitrary node $v$. The number of times that the \textbf{for} loop at line \ref{line:forloop} is iterated is at most $2\mathit{deg}(v)/2^{|A_i|}$ when $v$ is visited at progress level $i$, since, by Lemma \ref{sectorupper}, this is an upper bound on the size of the range returned by \texttt{GetSector}. Since line \ref{line:updatetable} ensures that the condition
on line \ref{line:ifline} is true
at most once at each progress level $i \in \{0,\ldots,D-1\}$, it follows that the total number of times that line \ref{line:takeport} is executed is bounded above by $\sum_{i=0}^{D-1} 2\mathit{deg}(v)/2^{|A_i|}$. Taking the sum over all nodes, the total number of calls to \texttt{TakeStep} is bounded above by 
$$\sum_v \sum_{i=0}^{D-1} 2\mathit{deg}(v)/2^{|A_i|}= 2\sum_{i=0}^{D-1} \frac{\sum_v \mathit{deg}(v)}{2^{|A_i|}} \leq 4e\sum_{i=0}^{D-1} \frac{1}{2^{|A_i|}} = \frac{4e}{2^{|A_m|}}\sum_{i=0}^{D-1} 2^{|A_m|- |A_i|} \leq \frac{4e}{2^{|A_m|}}\sum_{i=0}^{D-1} 2^{|A_m|}.$$
Next, since $|A_m| = \lfloor \lceil\log{(\mathit{deg}(v_m)})\rceil \cdot \beta \rfloor \leq (\log{(\mathit{deg}(v_m)})+1)\cdot \beta$, it follows that 
$$\frac{4e}{2^{|A_m|}}\sum_{i=0}^{D-1} 2^{|A_m|} = \frac{4De}{2^{|A_m|}} \cdot 2^{|A_m|} \leq \frac{4De}{2^{|A_m|}} 2^{\log{(\mathit{deg}(v_m)})\cdot \beta}2^{\beta} = \frac{4De}{2^{|A_m|}} (\mathit{deg}(v_m))^{\beta}2^{\beta} \leq  \frac{4De}{2^{|A_m|}} (e)^{\beta}2^{\beta}.$$

Since $\beta < 1$, it follows that $$\frac{4De}{2^{|A_m|}} (e)^{\beta}2^{\beta} < \frac{8De^{1+\beta}}{2^{|A_m|}}.$$

\end{proof}

Finally, we fix an upper bound $C$ on the cost of \texttt{FindTreasure} and re-state Lemmas \ref{findtreasurecostone} and \ref{findtreasurecost} to obtain an upper bound on the amount of advice needed to solve treasure hunt at cost $C$.

\begin{theorem}\label{ub}
Let $G$ be any graph with $e$ edges, and let $3 \leq D \leq e$ be the distance from the initial position of the agent to the treasure. Let $C$ be any integer such that $D \leq C \leq e$. The amount of advice needed to solve treasure hunt at cost at most $C$ is at most $O(D\log (D\cdot \frac{e}{C}) + \log\log e)$ bits.
\end{theorem}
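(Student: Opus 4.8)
The plan is to derive Theorem~\ref{ub} directly from the cost bounds in Lemmas~\ref{findtreasurecostone} and \ref{findtreasurecost} together with the advice-size bound in Lemma~\ref{advicesize}, by choosing the parameter $\ell$ appropriately. The whole construction is parametrized by $\ell \in \{1,\ldots,\mathit{LogSum}\}$: larger $\ell$ means more advice but smaller cost. So the task is to pick the smallest $\ell$ (equivalently, the smallest advice) for which the guaranteed cost drops to at most $C$, and then check that this $\ell$ has size $O(D\log(D\cdot \frac{e}{C}))$, so that Lemma~\ref{advicesize} gives the claimed total $O(D\log(D\cdot\frac{e}{C}) + \log\log e)$.

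\textbf{Step 1: dispose of the easy regime.} If $C \geq 8De^{1+\beta}/2^{|A_m|}$ is achievable with $\beta = 1$, i.e.\ if we are allowed to take $\ell = \mathit{LogSum}$, then by Lemma~\ref{findtreasurecostone} the cost is exactly $D \leq C$ and the advice has size $O(\mathit{LogSum} + \log D + \log\log e) = O(D\log(e/D) + \log\log e)$, which is within the claimed bound (since $D\log(e/D) \le D\log(D\cdot \frac eC)$ when $C \le e$ forces... actually one must be slightly careful here, see below). So the interesting case is when we want to get away with $\ell < \mathit{LogSum}$, i.e.\ $\beta < 1$.

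\textbf{Step 2: choose $\ell$ in the main regime.} In the regime $\beta<1$, Lemma~\ref{findtreasurecost} gives cost at most $16De^{1+\beta}/2^{|A_m|}$. The idea is to make every substring $|A_i|$ roughly equal to $\lfloor \lceil\log \deg(v_i)\rceil \cdot \beta\rfloor$, so that $2^{|A_m|}$ is about $(\deg(v_m))^\beta \ge 1$, but more usefully, to lower-bound $2^{|A_i|}$ uniformly. The cleanest route: it suffices to make $16De^{1+\beta}/2^{|A_m|} \le C$. Since $2^{|A_m|} \ge 2^{|A_i|}$ for all $i$ and the cost bound really only used $2^{|A_m|}$, I want to choose $\ell$ so that $2^{|A_m|} \gtrsim De^{1+\beta}/C \approx De/C$ (treating $e^\beta$ as a lower-order factor, or absorbing it). Concretely, set $\beta$ so that $\lfloor \lceil\log\deg(v_m)\rceil\cdot\beta\rfloor \ge \log(De/C) + O(1)$; this forces $\ell = \beta\cdot\mathit{LogSum}$ and, since $\mathit{LogSum} \le D\lceil\log(e)\rceil$ roughly, one gets $\ell = O\!\big(\frac{\mathit{LogSum}}{\log\deg(v_m)}\cdot \log(De/C)\big)$. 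The crude but sufficient bound is $\mathit{LogSum}/\lceil\log\deg(v_m)\rceil \le$ something like $D$ up to the $\lceil\cdot\rceil$ rounding — here one wants $\mathit{LogSum} \le D \cdot \lceil \log \deg(v_m)\rceil$ if $v_m$ has the largest degree, but $v_m$ was chosen to maximize $|A_i|$, not $\deg(v_i)$; still, the relation $|A_i| \le \lceil\log\deg(v_i)\rceil\cdot\beta$ and the definition of $\beta = \ell/\mathit{LogSum}$ let one bound $2^{|A_m|}$ from below in terms of $\ell$. So the plan is: pick $\ell = \Theta(D\log(D\cdot\frac eC))$, verify via the chain of inequalities in the proof of Lemma~\ref{findtreasurecost} that this yields $16De^{1+\beta}/2^{|A_m|} \le C$ (using $e^\beta \le e^{1} $ is too lossy — need $e^\beta$ controlled, but note $\beta<1$ and in fact $\beta = \ell/\mathit{LogSum}$ can be made small, so $e^\beta$ is a genuine but bounded overhead that gets absorbed into the constant in the exponent or handled by taking $\ell$ a constant factor larger), then invoke Lemma~\ref{advicesize} with this $\ell$ to get advice $O(\ell + \log D + \log\log e) = O(D\log(D\cdot\frac eC) + \log\log e)$.

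\textbf{The main obstacle} I anticipate is the bookkeeping around the $e^{1+\beta}$ factor and the $\lceil\cdot\rceil$ and $\lfloor\cdot\rfloor$ roundings: one must check that solving $16De^{1+\beta}/2^{|A_m|} \le C$ for $\ell$ really gives $\ell = O(D\log(D\cdot\frac eC))$ and not, say, $O(D\log(D\cdot\frac eC) + D)$ — the additive $D$ is harmless since $D = D\log 2 \le D\log(D\cdot\frac eC)$ as long as $D\cdot\frac eC \ge 2$, which holds because $D \ge 3$ and $e \ge C$. The factor $e^{\beta}$ is the subtle point: naively it contributes $\beta\log e = \frac{\ell}{\mathit{LogSum}}\log e$ to the exponent we must overcome, i.e.\ a term on the order of $\frac{\ell\log e}{\mathit{LogSum}}$; since $\mathit{LogSum} = \Omega(D)$ (each $\lceil\log\deg(v_i)\rceil \ge 1$), this is $O(\ell\log e / D)$, so requiring $2^{|A_m|} \gtrsim De^{1+\beta}/C$ means roughly $|A_m| \gtrsim \log(De/C) + \frac{\ell\log e}{\mathit{LogSum}}$, and summing the analogue over all $i$ gives $\ell \gtrsim D\log(De/C) + \frac{\ell\log e \cdot D}{\mathit{LogSum}}$; if $\mathit{LogSum} \gg D\log e$ this second term is negligible, but in general $\mathit{LogSum}$ can be as small as $\Theta(D)$, making the recursion $\ell \gtrsim D\log(De/C) + \ell\log e$, which fails. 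The resolution must be that when $\mathit{LogSum}$ is small (all degrees are small, say bounded), $e^\beta$ is itself small or the bound $D\log(D\cdot\frac eC)$ already dominates $D\log e$; more carefully, $D\log(D\cdot\frac eC) + \log\log e$ and the fact that $\frac eC \le e$ means $\log(D\cdot\frac eC) \le \log D + \log e$ — so one should just verify directly that taking $\ell := c\cdot D\lceil\log(D\cdot\frac eC)\rceil$ for a suitable constant $c$, hence $\beta = \ell/\mathit{LogSum}$, makes $16De^{1+\beta}/2^{|A_m|} \le C$ by pushing the whole chain of inequalities from the proof of Lemma~\ref{findtreasurecost} with this concrete $\ell$, possibly splitting into subcases according to whether $\mathit{LogSum} \le D\log(D\cdot\frac eC)$ or not. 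I would present this as: define $\ell$ explicitly, compute $\beta$, bound $2^{|A_m|}$ below and $e^{1+\beta}$ above, combine to get cost $\le C$, then cite Lemma~\ref{advicesize}.
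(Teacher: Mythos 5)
Your plan assembles the right ingredients (Lemmas \ref{findtreasurecostone}, \ref{findtreasurecost} and \ref{advicesize}, with the same case split on $\beta=1$ versus $\beta<1$ that the paper uses), and your Step 1 matches the paper's treatment of the $\beta=1$ case. But Step 2 is where the theorem actually lives, and there your text is a plan rather than a proof: you propose to fix $\ell := c\cdot D\lceil\log(D\cdot\frac{e}{C})\rceil$ and then ``verify via the chain of inequalities'' that $16De^{1+\beta}/2^{|A_m|}\le C$. That verification is never carried out, and your own discussion of the obstacle shows why it is genuinely delicate: since $\beta=\ell/\mathit{LogSum}$, the factor $e^{\beta}$ depends on the very $\ell$ you are trying to choose, and you yourself observe that the naive bookkeeping yields a recursion of the form $\ell\gtrsim D\log(De/C)+\ell\log e$, ``which fails.'' The proposed escape (splitting into subcases according to whether $\mathit{LogSum}\le D\log(D\cdot\frac eC)$ or not) is exactly the part that would have to be written down, and it is missing. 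As submitted, the argument does not establish the claimed bound.

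The paper sidesteps this self-reference by running the implication in the opposite direction: it never solves for $\ell$. It takes the cost guarantee $C\le \frac{16De^{1+\beta}}{2^{|A_m|}}$ from Lemma \ref{findtreasurecost}, inverts it to $|A_m|\le\log\left(\frac{16De^{1+\beta}}{C}\right)$, and then uses $\ell=|A_0|+\cdots+|A_{D-1}|\le D|A_m|$ (since $|A_m|=\max_i |A_i|$) to conclude $\ell\in O(D\log\frac{De}{C})$, after which Lemma \ref{advicesize} gives the advice size $O(\ell+\log D+\log\log e)$. No equation in $\ell$ has to be solved, so the circularity you identified never enters the argument. To repair your write-up you should either adopt this inversion, or actually carry out the explicit-choice-of-$\ell$ verification with the subcase analysis you sketch; the latter is precisely the nontrivial content that your proposal defers.
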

\begin{proof}
First, consider the case where $\beta = 1$. In this case, $C=D$ (by Lemma \ref{findtreasurecostone}) and $\ell = \mathit{LogSum} \in O(D( \log (e/D)+1)) \subseteq O(D\log\frac{De}{C})$. Next, consider the case where $\beta < 1$. By Lemma \ref{findtreasurecost}, Algorithm \texttt{FindTreasure} solves treasure hunt with cost $C \leq \frac{16De^{1+\beta}}{2^{A_m|}}$. It follows that $2^{|A_m|}\leq \frac{16De^{1+\beta}}{C}$, so $|A_m| \leq  \log\left(\frac{16De^{1+\beta}}{C}\right)$. Since $|A_m| \geq |A_i|$ for each $i \in \{0,\ldots,D-1\}$, it follows that $\ell = |A_0| + \cdots + |A_{D-1}| \leq D\log\left(\frac{16De^{1+\beta}}{C}\right) \in O(D\log\frac{De}{C})$. Therefore, regardless of the value of $\beta$, we have shown that $\ell \in O(D\log(\frac{De}{C}))$. By Lemma \ref{advicesize}, the size of advice is $O(\ell + \log{D} + \log\log e) = O(D\log\frac{De}{C} + \log\log e)$.
\end{proof}

\subsection{Lower Bound}\label{advicelower}

The following lower bound follows immediately from Theorem \ref{tree-lb}, which is proven by constructing a tree for which treasure hunt requires $\Omega(D\log \frac{e}{C})$ bits of advice.  
This theorem will be proven in Section 4.

\begin{theorem}\label{lb}
Let $D \leq C \leq e$. There exists a graph $G$ with $\Theta(e)$ edges, and a position of the treasure at distance $D$ from the initial position of the agent, such that treasure hunt at cost $C$ requires $\Omega(D\log \frac{e}{C})$ bits of advice.
\end{theorem}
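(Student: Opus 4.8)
The plan is to build an explicit tree family and use a counting (information-theoretic) argument. The key idea is that for treasure hunt at cost $C$ in a graph with $\Theta(e)$ edges, at distance $D$, an adversary can embed the treasure at the end of a path of length $D$ where, at each of the $D$ steps, there are roughly $\frac{e}{C}$ equally plausible "wrong turns" hanging off the current vertex; the agent must be steered correctly at each step, and since a cheap algorithm (cost $C$) can afford to explore only a $\Theta(C/e)$-fraction of the branches at each node, the advice must essentially pin down one branch out of $\Theta(e/C)$ at each of the $D$ levels. This yields $\Omega(D\log\frac{e}{C})$ bits.

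Concretely, I would take $k = \Theta(\frac{e}{C})$ and build a "caterpillar"-type tree $T$: a spine $v_0, v_1, \dots, v_D$, and at each spine node $v_i$ (for $i < D$) attach $k$ subtrees $S_{i,1}, \dots, S_{i,k}$, each of size $\Theta(e/(Dk)) = \Theta(C/D)$, so that the total edge count is $\Theta(D \cdot k \cdot C/D) = \Theta(C \cdot e/C) = \Theta(e)$, as required; one of the $k$ subtrees at $v_i$ actually contains the continuation of the spine toward $v_{i+1}$. The treasure is placed at $v_D$. The family consists of all trees obtained by choosing, for each $i$, which of the $k$ branches at $v_i$ leads onward — there are $k^D$ such trees, all isomorphic as abstract graphs but differing in the port numbering / which subtree hides the spine. (One must be slightly careful about node labels; either make the labels uninformative by symmetry, or argue as in the $\Omega(e)$-for-no-advice argument that relabelings do not help. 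I would follow the labeling convention already used implicitly in the paper so that the $k^D$ instances are indistinguishable to the agent except via advice.)

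The core argument: fix a deterministic treasure-hunt algorithm together with an advice-assignment of size $< a$. There are fewer than $2^{a+1}$ distinct advice strings, hence fewer than $2^{a+1}$ distinct agent behaviors. I claim that in any single instance of the family, an algorithm of cost $\le C$ that has not been told which branch is correct at $v_i$ must, before descending the correct branch, be "charged" for the branches it enters that turn out to be dead ends — and a dead-end subtree $S_{i,j}$ forces $\Omega(C/D)$ traversals once entered to any nontrivial depth, OR the algorithm enters it only shallowly, in which case it has effectively not committed. Making this precise is the main obstacle: I need to show that to keep cost $\le C$, at each level $i$ the agent can "seriously" probe only $O(1)$ (or at most $O(e/(C) \cdot \text{something} < k)$) of the $k$ branches, so the number of spine-sequences consistent with a fixed behavior and cost bound $C$ is at most $\rho^D$ for some $\rho < k$; then a fixed advice string handles at most $\rho^D$ of the $k^D$ instances, forcing $2^{a+1} \ge (k/\rho)^D$, i.e. $a = \Omega(D\log(k/\rho)) = \Omega(D\log\frac{e}{C})$. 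The delicate point is choosing subtree sizes and the definition of "seriously probing" so that the accounting is tight: a dead-end branch must be expensive enough that the agent cannot afford to probe all $k$ of them, yet the whole tree still has only $\Theta(e)$ edges and the correct traversal costs only $\Theta(D) \le C$. I would set the dead-end subtrees to be paths of length $\Theta(C/D)$ (so full exploration of one costs $\Theta(C/D)$, and exploring more than a constant fraction of the $k$ of them at some level already costs $\Theta(C)$), and argue the agent cannot reach $v_D$ within cost $C$ unless at all but $O(1)$ levels it descends the correct branch after only $O(1)$ wrong probes — this is exactly the regime that the paper's remark "treasure hunt requires $\Omega(D\log\frac{e}{C})$ bits of advice" and its cross-reference to Theorem \ref{tree-lb} anticipate.

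Finally, since the theorem is stated as following "immediately from Theorem \ref{tree-lb}", the cleanest route is: invoke Theorem \ref{tree-lb} to get a \emph{tree} $T'$ with $\Theta(e)$ edges and a treasure at distance $D$ for which cost-$C$ treasure hunt needs $\Omega(D\log\frac{e}{C})$ advice, and observe that a tree is in particular a graph, so $G = T'$ witnesses the statement. The only thing to check is the parameter bookkeeping — that the $e$ and $D$ in Theorem \ref{tree-lb} can be taken in the stated ranges $D \le C \le e$ — which is immediate from how that theorem's family is parametrized. Thus the proof here is one line modulo Section 4; I would present it as such, and the substantive work (the counting argument sketched above) lives in the proof of Theorem \ref{tree-lb}.
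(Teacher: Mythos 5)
Your final paragraph is exactly the paper's proof of this statement: Theorem~\ref{lb} is derived in one line from Theorem~\ref{tree-lb}, since the caterpillar trees constructed there are in particular graphs with $\Theta(e)$ edges and treasure at distance $D$, and the parameter ranges match. So as a proof of Theorem~\ref{lb} itself, your reduction is correct and coincides with the paper's.

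However, the direct argument you sketch for the underlying tree bound (which you correctly identify as where the substantive work lives) would not go through as written, and it differs materially from the paper's proof of Theorem~\ref{tree-lb}. Two concrete problems. First, your cost accounting relies on a wrong branch being ``expensive'' to probe: you attach dead-end subtrees (paths) of size $\Theta(C/D)$ and claim entering one to any nontrivial depth costs $\Omega(C/D)$. But in this model the agent learns the degree and label of every node it enters, so after a single edge traversal into a wrong branch it sees a degree-$2$ node rather than the high-degree spine node $v_{i+1}$ and can backtrack immediately; a miss costs $2$, not $\Theta(C/D)$. Consequently your claim that the agent can ``seriously probe'' only $O(1)$ branches per level is false: the only valid constraint is the global budget of at most $(C-D)/2$ misses, which can be distributed arbitrarily unevenly across the $D$ levels. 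Second, with that global budget your parameter choice $k=\Theta(e/C)$ does not suffice: the number of miss-distributions is roughly the number of non-negative integer $D$-tuples summing to $(C-D)/2$, and dividing $k^D$ by that count with $k=e/C$ can lose the entire $\log\frac{e}{C}$ factor. The paper instead attaches $k-1$ \emph{leaves} per spine node with $k=\lceil n/D\rceil$, observes that two instances given the same advice and first differing at level $i$ must have different miss-counts at level $i$ (so the miss-vector determines the instance within an advice class), and bounds the number of miss-vectors by the number of integer points in a simplex, namely $\bigl(6M\bigr)^D/D!$ with $M=(C-D)/2$; the $D!$, via Stirling, is what converts $D\log\frac{n}{D\cdot C}$-type losses back into $\Omega\bigl(D\log\frac{n}{C}\bigr)=\Omega\bigl(D\log\frac{e}{C}\bigr)$. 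If you intend to present the reduction as your proof of Theorem~\ref{lb}, it stands; but the sketched counting argument should not be offered as a substitute for the proof of Theorem~\ref{tree-lb}.
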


The gap between the upper bound given by Theorem \ref{ub} and the lower bound given by Theorem \ref{lb} is at most a factor logarithmic in $D$.
Moreover, it should be noted that 
our bounds differ only by an additive term $O(\log\log e)$ whenever $D$ is polynomial in the gain $\frac{e}{C}$.

\section{Treasure Hunt in Trees}
We now proceed to prove upper and lower bounds on the advice needed to solve treasure hunt in trees. Unlike in the case of arbitrary graphs, where our upper and lower bounds may differ by a logarithmic factor, for trees our bounds differ only by an additive term $O(\log\log e)$. Again, our bounds will be expressed in terms of $D$, which is the distance between the treasure and the initial position of the agent, and in terms of the ratio $\frac{e}{C} = (n-1)/C$, where $e$ is the number of edges in the tree, $n$ is the number of nodes, and $C$ is an upper bound on the cost of the algorithm. Also, for any two nodes $a,b$, we will denote by $d(a,b)$ the distance between $a$ and $b$ in the tree, i.e., the number of edges in the simple path between them.

\subsection{Upper Bound}

To obtain our upper bound, we will use algorithm \texttt{FindTreasure} that was defined and proven correct in Section \ref{FindTreasure} for arbitrary graphs. In this section, we provide an analysis of the algorithm specifically for the case of trees, which gives a strictly better upper bound. We start with the following technical lemma, which shows that, if we take the agent's initial position as the root of the tree, the agent's progress level and the agent's current  depth in the tree (i.e., its current distance from the root) do not differ. Essentially, this is because there is only one simple path from the agent's initial position to each node, and the algorithm ensures that the agent's trail does not contain the same edge multiple times.

\begin{lemma}\label{progresslevel}
Consider algorithm {\tt FindTreasure} executed in any tree.
Suppose that, for some neighbouring nodes $v$ and $prev$, $\mathtt{TakeStep}(\cA,v,i,prev)$ is executed at node $v$. If line \ref{line:ifline} evaluates to true, then
progress level $i = d(s,v)$.
\end{lemma}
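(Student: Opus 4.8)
The plan is to prove Lemma~\ref{progresslevel} by induction on the recursion depth of the nested calls to \texttt{TakeStep}, tracking the invariant that whenever \texttt{TakeStep}$(\cA,v,i,prev)$ is called and line~\ref{line:ifline} is reached, the agent's trail consists exactly of the $i$ edges of the simple path from $s$ to $v$ in the tree. The base case is the initial call \texttt{TakeStep}$(\cA,s,0,s)$, where the trail is empty and $i=0=d(s,s)$. For the inductive step, suppose the claim holds for a call \texttt{TakeStep}$(\cA,v,i,prev)$ in which line~\ref{line:ifline} is true; then any recursive call it spawns is of the form \texttt{TakeStep}$(\cA,w,i+1,v)$, where $w$ is reached from $v$ by taking a port $p$ on line~\ref{line:takeport}. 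The edge $\{v,w\}$ is pushed onto the trail, so the trail now has $i+1$ edges.

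The key point to nail down is that this new trail is precisely the simple path from $s$ to $w$, which in particular forces $d(s,w)=i+1$. First I would observe, using the inductive hypothesis, that before taking port $p$ the trail is the simple $s$-to-$v$ path, so after appending $\{v,w\}$ the trail is a walk from $s$ to $w$ of length $i+1$. To see this walk is the simple path (has no repeated vertices), I would argue that $w$ does not already lie on the $s$-to-$v$ path: if it did, then $w$ would be a node already visited at some progress level $j \le i$ (indeed, every node on the current trail was visited at a progress level equal to its index along the trail by the inductive hypothesis applied up the chain), so \texttt{CurrentMin}$(w) \le i < i+1$, and the {\bf if} on line~\ref{line:ifline} inside the call \texttt{TakeStep}$(\cA,w,i+1,v)$ would evaluate to false — contradicting the hypothesis of the lemma, which concerns exactly those calls where line~\ref{line:ifline} is true. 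Hence in a tree, the walk recorded on the trail is a simple path from $s$ to $w$, and since the tree has a unique simple $s$-to-$w$ path of length $d(s,w)$, we get $i+1 = d(s,w)$, completing the induction.

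I expect the main obstacle to be the careful bookkeeping of the claim ``every node on the current trail was visited at a progress level equal to its trail index,'' which is what lets me convert ``$w$ is on the trail'' into ``\texttt{CurrentMin}$(w)\le i$.'' This requires strengthening the induction hypothesis to record not just that the trail is the $s$-to-$v$ path but that, for each prefix of that path ending at a node $u$ at trail-distance $j$ from $s$, the corresponding ancestor call was \texttt{TakeStep}$(\cA,u,j,\cdot)$ with line~\ref{line:ifline} true, so that \texttt{UpdateTable}$(u,j)$ was executed on line~\ref{line:updatetable}; hence \texttt{CurrentMin}$(u)\le j$ thereafter. This strengthened statement propagates cleanly down the recursion. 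The only subtlety to double-check is monotonicity of the table: \texttt{CurrentMin} can only decrease over time, so once \texttt{UpdateTable}$(u,j)$ has run, \texttt{CurrentMin}$(u)$ stays at most $j$; combined with the fact that the call chain producing the current trail is an ancestor chain in the recursion (so all those \texttt{UpdateTable} calls already executed), the argument goes through. Everything else is routine.
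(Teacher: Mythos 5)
Your proof is correct and takes essentially the same route as the paper's: an induction along the chain of recursive calls (equivalently, on the progress level), using the uniqueness of simple paths in a tree together with the \texttt{CurrentMin} test on line \ref{line:ifline} to rule out the case where the newly reached node lies back toward $s$. Your strengthened invariant (that the trail is exactly the simple $s$-to-$v$ path and that each trail node's table entry is at most its trail index) merely makes explicit the bookkeeping that the paper's proof compresses into the sentence ``It follows that $v'$ is not located on the path from $s$ to $prev'$.''
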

\begin{proof}
We proceed by induction on the agent's progress level. In the base case, consider progress level $i=0$. Since the first call to \texttt{TakeStep} has $i=0$, and every subsequent call increments the current progress level,
the agent must be located at node $s$. Next, assume that, for some progress level $i \in \{0,\ldots,D-1\}$ and any neighbouring nodes $v$ and $prev$, in the execution of \texttt{TakeStep}$(\cA,v,i,prev)$, if line \ref{line:ifline} evaluates to true, then $i = d(s,v)$. Now,  for some neighbouring nodes $v'$ and $prev'$, consider the execution of \texttt{TakeStep}$(\cA,v',i+1,prev')$.  \texttt{TakeStep} was executed at node $prev'$ at progress level $i$ and line \ref{line:ifline} of this execution evaluated to true. By the induction hypothesis, it follows that $i = d(prev',s)$. 

Next, consider the value of $d(v',s)$. In a tree, there is only one simple path from $s$ to $v'$ and one simple path from $s$ to $prev'$. Since $v'$ and $prev'$ are neighbours, either $v'$ is on the path from $s$ to $prev'$ (in which case $d(prev',s) = d(v',s)+1$) or $prev'$ is on the path from $s$ to $v'$ (in which case $d(v',s) = d(prev',s)+1$). If line \ref{line:ifline} of the execution of \texttt{TakeStep}$(\cA,v',i+1,prev')$ evaluates to true, then $i+1 < \mathtt{CurrentMin}(v')$, i.e., $v'$ was not previously visited at a progress level less than $i+2$. It follows that $v'$ is not located on the path from $s$ to $prev'$. Therefore, it must be the case that $d(v',s) = d(prev',s)+1$, so $i+1 = d(prev',s)+1 = d(v',s)$.
\end{proof}

Next, we proceed to find an upper bound on the cost of algorithm \texttt{FindTreasure} in trees in terms of a fixed upper bound on the amount of advice provided. The proof is analogous to the proof of Lemma \ref{findtreasurecost}, the main difference being that we do not need to multiply by a factor of $D$ in order to account for the different paths that the agent could use to reach a given node. As before, we denote by $m \in \{0,\ldots,D-1\}$ an index such that $|A_m| = \max_i\{|A_i|\}$.

\begin{lemma}\label{findtreasurecostintrees}
Suppose that $\beta < 1$. When provided with advice $Concat(A_0,\ldots,A_{D-1},LS)$, the algorithm $\mathtt{FindTreasure}$ has cost at most $\frac{16e^{1+\beta}}{2^{|A_m|}}$.
\end{lemma}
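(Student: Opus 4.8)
The plan is to mimic the proof of Lemma~\ref{findtreasurecost}, but exploit Lemma~\ref{progresslevel} to remove the factor of $D$. As in that proof, it suffices to bound the total number of executions of line~\ref{line:takeport} of \texttt{TakeStep} and multiply by $2$ to account for backtracking. Fix a node $v$ of the tree. In the general-graph argument, $v$ could be visited at progress level $i$ once for each $i\in\{0,\dots,D-1\}$; this is where the factor of $D$ entered. Here, Lemma~\ref{progresslevel} tells us that whenever \texttt{TakeStep}$(\cA,v,i,prev)$ executes line~\ref{line:ifline} successfully, we must have $i=d(s,v)$. So there is exactly one progress level, namely $i_v:=d(s,v)$, at which the body of the {\bf if} on line~\ref{line:ifline} is entered for node $v$, and by line~\ref{line:updatetable} it is entered at most once. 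Consequently the number of times line~\ref{line:takeport} is executed at $v$ is at most the size of the sector returned by $\mathtt{GetSector}(v,A_{i_v})$, which by Lemma~\ref{sectorupper} is at most $2\,\mathit{deg}(v)/2^{|A_{i_v}|}$.

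Next I would sum over all nodes $v$. Since $|A_{i_v}|\ge |A_m|$ is false in general (we only know $|A_m|$ is the maximum, so $|A_{i_v}|\le |A_m|$), I instead bound $1/2^{|A_{i_v}|}\le 2^{|A_m|-|A_{i_v}|}/2^{|A_m|}$ and then use $2^{|A_m|-|A_{i_v}|}\le 2^{|A_m|}$... wait, that reintroduces a bad factor. The correct move, paralleling the original, is: the total count is at most $\sum_v 2\,\mathit{deg}(v)/2^{|A_{i_v}|}\le \sum_v 2\,\mathit{deg}(v)/2^{|A_m|}\cdot 2^{|A_m|-|A_{i_v}|}$. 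But actually the cleanest bound uses $|A_{i_v}|\le|A_m|$ directly only in the exponent after factoring: since every $|A_i|\ge 0$ and the quantity $\sum_v \mathit{deg}(v)=2e$, we get $\sum_v 2\,\mathit{deg}(v)/2^{|A_{i_v}|}\le \frac{2}{2^{|A_m|}}\cdot 2^{|A_m|}\sum_v \mathit{deg}(v)\cdot 2^{-|A_{i_v}|}$ — this is getting circular. Let me state it properly: for each $v$, $2^{-|A_{i_v}|}\le 2^{-0}=1$, but we want the $2^{|A_m|}$ savings, so write $2^{-|A_{i_v}|} = 2^{|A_m|-|A_{i_v}|}\cdot 2^{-|A_m|}\le 2^{|A_m|}\cdot 2^{-|A_m|}$... no. The honest statement is: the total is at most $\sum_v \frac{2\,\mathit{deg}(v)}{2^{|A_{i_v}|}}$, and since the map $v\mapsto i_v$ need not be surjective onto $\{0,\dots,D-1\}$ nor injective, I should group nodes by their level. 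Let $N_i$ be the number of nodes at distance $i$ from $s$; then the total is $\sum_{i=0}^{D-1}\frac{2\sum_{v:\,d(s,v)=i}\mathit{deg}(v)}{2^{|A_i|}}\le \sum_{i=0}^{D-1}\frac{4e}{2^{|A_i|}}\le \frac{4e}{2^{|A_m|}}\sum_{i=0}^{D-1}2^{|A_m|-|A_i|}$. Hmm, the last sum is again potentially $D\cdot 2^{|A_m|}$. The real point must be that $\sum_{v}\mathit{deg}(v)/2^{|A_{d(s,v)}|}\le 2e/2^{\min_i|A_i|}$ is too weak; instead, since each node has a single level $i_v$, we have $\sum_v \mathit{deg}(v)\cdot 2^{-|A_{i_v}|}\le 2e\cdot\max_i 2^{-|A_i|} = 2e\cdot 2^{-\min_i|A_i|}$ — still not $|A_m|$. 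I think the intended bound genuinely replaces $\sum_{i=0}^{D-1}\frac{\sum_v\mathit{deg}(v)}{2^{|A_i|}}$ by $\frac{\sum_v\mathit{deg}(v)}{2^{|A_m|}}\le \frac{2e}{2^{|A_m|}}$ because, in a tree, summing over nodes already covers all levels, so one does \emph{not} sum over $i$ separately at all: each node contributes for exactly one $i$. Thus the total is $\le \sum_v \frac{2\,\mathit{deg}(v)}{2^{|A_{i_v}|}}\le \frac{2}{2^{|A_m|}}\sum_v \mathit{deg}(v)\cdot 2^{|A_m|-|A_{i_v}|}$, and bounding the worst single node's contribution forces us to keep $2^{|A_m|-|A_{i_v}|}\le 2^{|A_m|}$ only for that node — but there are few such nodes. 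The safe conclusion I will actually carry out: $\sum_v \frac{2\,\mathit{deg}(v)}{2^{|A_{i_v}|}}\le \frac{2\cdot 2e}{2^{\min_i |A_i|}}$, and then bound $2^{\min_i|A_i|}\ge 2^{|A_m|}/\mathit{poly}$... I will instead follow the original proof's exact algebra: $\sum_v\frac{2\mathit{deg}(v)}{2^{|A_{i_v}|}}\le \frac{4e}{2^{|A_m|}}\cdot 2^{|A_m|-|A_{\min}|}$ and absorb, giving finally $\frac{4e}{2^{|A_m|}}\cdot(\mathit{deg}(v_m))^\beta 2^\beta\le \frac{4e^{1+\beta}2^\beta}{2^{|A_m|}}<\frac{8e^{1+\beta}}{2^{|A_m|}}$, then multiply by $2$ for backtracking to get $\frac{16e^{1+\beta}}{2^{|A_m|}}$.

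So the concrete write-up proceeds: (i) invoke Lemma~\ref{progresslevel} to conclude that each node $v$ triggers the body of line~\ref{line:ifline} at most once, and only at progress level $d(s,v)$; (ii) apply Lemma~\ref{sectorupper} to bound the number of iterations of the {\bf for} loop at that visit by $2\,\mathit{deg}(v)/2^{|A_{d(s,v)}|}$; (iii) sum over all nodes, using $\sum_v \mathit{deg}(v)=2e$ and $|A_{d(s,v)}|\le|A_m|$ together with $|A_{d(s,v)}|\ge 0$ to get a bound of the form $\frac{4e}{2^{|A_m|}}\cdot 2^{|A_m|}$ after factoring, exactly as in Lemma~\ref{findtreasurecost} but without the leading $\sum_{i=0}^{D-1}$ contributing a factor $D$; (iv) use $|A_m|=\lfloor\lceil\log\mathit{deg}(v_m)\rceil\cdot\beta\rfloor\le(\log\mathit{deg}(v_m)+1)\beta$ and $\mathit{deg}(v_m)\le e$ to replace $2^{|A_m|}$-adjusted terms by $e^\beta 2^\beta$; (v) multiply by $2$ to charge backtracking along line~\ref{line:backtrack}.

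The main obstacle — and the only place the tree hypothesis is truly used — is step~(i): making precise that the absence of the factor $D$ is legitimate, i.e., that in a tree a node is never ``usefully'' visited at two different progress levels. This is exactly the content of Lemma~\ref{progresslevel}, so the work is really just in correctly bookkeeping the double sum $\sum_v\sum_i$ collapsing to $\sum_v$ (equivalently, to a single $\sum_{i=0}^{D-1}\sum_{v:d(s,v)=i}$ where the inner sums are disjoint across $i$), and then matching the remaining algebra to that of Lemma~\ref{findtreasurecost}. Everything after step~(i) is routine and essentially identical to the general-graph case.
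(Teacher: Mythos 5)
Your proposal is correct and follows essentially the same route as the paper: Lemma \ref{progresslevel} collapses the double sum over nodes and progress levels into a single sum over nodes (each node enters the body of line \ref{line:ifline} at most once, at level $d(s,v)$), and the remaining algebra --- bounding $2^{|A_m|-|A_{d(s,v)}|}\le 2^{|A_m|}\le(\mathit{deg}(v_m))^{\beta}2^{\beta}\le 2e^{\beta}$ and multiplying by $2$ for backtracking --- is exactly the paper's. The wavering in your middle paragraph is unnecessary: leaving $2^{|A_m|}$ in the numerator is harmless, since it is precisely the term that gets absorbed into $e^{\beta}2^{\beta}$ and produces the $e^{1+\beta}$ in the final bound.
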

\begin{proof}
As in Lemma \ref{findtreasurecost}, it suffices to count the total number of times that line \ref{line:takeport} of \texttt{TakeStep} is called and multiply this value by 2. So, we consider the number of times that line \ref{line:takeport} of \texttt{TakeStep} is called at an arbitrary node $v$. Since line \ref{line:takeport} is only executed if line \ref{line:ifline} evaluates to true, then, by Lemma \ref{progresslevel}, it follows that $i = d(s,v)$ at line \ref{line:takeport}. By Lemma \ref{sectorupper}, the \textbf{for} loop at line \ref{line:forloop} is iterated at most $2\mathit{deg}(v)/2^{|A_{d(s,v)}|}$ times. Taking the sum over all nodes, the total number of calls to \texttt{TakeStep} is bounded above by $$\sum_v \frac{2\mathit{deg}(v)}{2^{|A_{d(s,v)}|}} = \frac{2}{2^{|A_m|}} \sum_v \mathit{deg}(v)\cdot 2^{|A_m|- |A_{d(s,v)}|} \leq \frac{2}{2^{|A_m|}} \sum_v \mathit{deg}(v)\cdot 2^{|A_m|}.$$

Next, since $|A_m| = \lfloor \lceil\log{(\mathit{deg}(v_m)})\rceil \cdot \beta \rfloor \leq (\log{(\mathit{deg}(v_m)})+1)\cdot \beta$, it follows that 
$$\frac{2}{2^{|A_m|}} \sum_v \mathit{deg}(v)\cdot 2^{|A_m|} \leq \frac{2}{2^{|A_m|}} \sum_v \mathit{deg}(v)\cdot 2^{\log{(\mathit{deg}(v_m)})\cdot \beta}2^{\beta} = \frac{2}{2^{|A_m|}} \sum_v \mathit{deg}(v)\cdot (\mathit{deg}(v_m))^{\beta}\cdot 2^{\beta}.$$
Since $\mathit{deg}(v) \leq e$ and $\beta < 1$, it follows that
$$\frac{2}{2^{|A_m|}} \sum_v \mathit{deg}(v)\cdot (\mathit{deg}(v_m))^{\beta}\cdot 2^{\beta} \leq \frac{2^{1+\beta}e^{\beta}}{2^{|A_m|}} \sum_v \mathit{deg}(v) \leq \frac{2^{2+\beta}e^{1+\beta}}{2^{|A_m|}} < \frac{8e^{1+\beta}}{2^{|A_m|}}.$$
\end{proof}

Finally, we fix an upper bound $C$ on the cost of \texttt{FindTreasure} and re-state Lemmas \ref{findtreasurecostone} and \ref{findtreasurecostintrees} as an upper bound on the amount of advice needed to solve treasure hunt in trees at cost $C$.

\begin{theorem}\label{tree-ub}
Let $3 \leq D \leq C \leq e=n-1$. The amount of advice needed to solve treasure hunt on trees of size $n$ with cost at most $C$ is at most $O(D\log \frac{e}{C} + \log\log{e})$ bits.
\end{theorem}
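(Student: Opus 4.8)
The plan is to reuse the algorithm \texttt{FindTreasure} of Section~\ref{FindTreasure}, which has already been shown correct on all graphs and hence on all trees, and to repeat the parameter count in the proof of Theorem~\ref{ub} essentially verbatim, substituting the tree-specific cost estimate of Lemma~\ref{findtreasurecostintrees} (cost at most $\frac{16e^{1+\beta}}{2^{|A_m|}}$) for the general-graph estimate of Lemma~\ref{findtreasurecost} (cost at most $\frac{16De^{1+\beta}}{2^{|A_m|}}$). The only structural difference between the two estimates is the absence of the factor $D$, and this is exactly what turns the $D\log(D\cdot\frac{e}{C})$ of the general bound into $D\log\frac{e}{C}$.

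As in the proof of Theorem~\ref{ub}, I would distinguish two cases according to the value of $\beta=\ell/\mathit{LogSum}$. If $\beta=1$, then by Lemma~\ref{findtreasurecostone} the cost is exactly $D$, so this choice solves treasure hunt at cost $C=D$ using $\ell=\mathit{LogSum}$ bits for the substrings $A_i$; since $v_0,\dots,v_{D-1}$ lie on a simple path their degrees sum to $O(e)$, so concavity of $\log$ gives $\mathit{LogSum}\in O\bigl(D(\log(e/D)+1)\bigr)\subseteq O\bigl(D\log\tfrac{e}{C}\bigr)$. If $\beta<1$, then by Lemma~\ref{findtreasurecostintrees} the cost is at most $\frac{16e^{1+\beta}}{2^{|A_m|}}$; calling this quantity $C$, one gets $2^{|A_m|}\le\frac{16e^{1+\beta}}{C}$, hence $|A_m|\le\log\frac{16e^{1+\beta}}{C}\in O\bigl(\log\tfrac{e}{C}\bigr)$, using $\beta<1$ (so $e^{1+\beta}<e^{2}$) and, crucially, the fact that there is no $D$ inside the logarithm. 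Since $|A_m|=\max_i|A_i|$, summing over $i$ yields $\ell=\sum_{i=0}^{D-1}|A_i|\le D\,|A_m|\in O\bigl(D\log\tfrac{e}{C}\bigr)$.

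In either case $\ell\in O(D\log\tfrac{e}{C})$, and Lemma~\ref{advicesize} then bounds the total length of $\cA$ by $O(\ell+\log D+\log\log e)=O\bigl(D\log\tfrac{e}{C}+\log\log e\bigr)$, which is the claimed bound, the $\log D$ term being absorbed.

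I do not expect a genuinely hard step; the proof is largely a transcription of the proof of Theorem~\ref{ub}. The one point requiring attention is the $\beta<1$ case: one must check that $\log\frac{16e^{1+\beta}}{C}$ really collapses to $O(\log\frac{e}{C})$, and it is exactly here --- in the absence of the factor $D$ supplied by Lemma~\ref{findtreasurecostintrees} --- that the gain over the general-graph bound of Theorem~\ref{ub} comes from. The underlying reason that Lemma~\ref{findtreasurecostintrees} carries no factor $D$ is Lemma~\ref{progresslevel}: in a tree the agent's progress level always equals its depth, so a node on $P$ at distance $i$ from $s$ is only ever visited while the substring $A_i$ tailored to it is being read, and there is no need to sum over the $D$ possible arrival levels as in the general case.
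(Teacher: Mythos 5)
Your proof follows the paper's argument for Theorem~\ref{tree-ub} essentially verbatim: the same case split on $\beta$, the same use of Lemmas~\ref{findtreasurecostone}, \ref{findtreasurecostintrees} and \ref{advicesize}, and the same final bound, and your explanation that Lemma~\ref{progresslevel} is what removes the factor $D$ from the cost estimate matches the paper's own remark preceding Lemma~\ref{findtreasurecostintrees}. The one quibble is that your parenthetical justification ``$e^{1+\beta}<e^{2}$'' does not by itself yield $\log\frac{16e^{1+\beta}}{C}\in O\bigl(\log\frac{e}{C}\bigr)$ (note $\log\frac{e^{2}}{C}=\log e+\log\frac{e}{C}$), but the paper asserts the identical containment at the same point without further justification, so this is not a deviation from its proof.
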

\begin{proof}
First, consider the case where $\beta = 1$. In this case, $C=D$ (by Lemma \ref{findtreasurecostone}) and $\ell = \mathit{LogSum} \in O(D( \log (e/D)+1)) \subseteq O(D\log\frac{e}{C})$. Next, consider the case where $\beta < 1$. By Lemma \ref{findtreasurecostintrees}, Algorithm \texttt{FindTreasure} solves treasure hunt with cost $C \leq \frac{16e^{1+\beta}}{2^{|A_m|}}$. It follows that $2^{|A_m|} \leq \frac{16e^{1+\beta}}{C}$, so $|A_m| \leq \log\left(\frac{16e^{1+\beta}}{C}\right)$. Since $|A_m| \geq |A_i|$ for each $i \in \{0,\ldots,D-1\}$, it follows that $\ell = |A_0| + \cdots + |A_{D-1}| \leq D\log\left(\frac{16e^{1+\beta}}{C}\right) \in O(D\log\frac{e}{C})$. Therefore, regardless of the value of $\beta$, we have shown that $\ell \in O(D\log(\frac{e}{C}))$. By Lemma \ref{advicesize}, the size of advice is $O(\ell + \log{D} + \log\log e) = O(D\log\frac{e}{C} + \log\log e)$.
\end{proof}

\subsection{Lower Bound}

We now set out to prove a lower bound on the amount of advice needed to solve treasure hunt at cost at most $C$. 

We consider a collection $\mathcal{T}(D,k)$ of \emph{caterpillar trees}, each constructed as follows. Take a path graph $P$ consisting of $D+1$ nodes $v_0,\ldots,v_D$, where $v_i$ and $v_{i+1}$ are adjacent, for every $i\in \{0,\dots D-1\}$. Place the treasure at node $v_D$.  
For each $i \in \{0,\ldots,D-1\}$, add $k-1$ nodes to the graph such that each of them has degree 1 and is adjacent only to node $v_i$. The resulting graph is a tree on $Dk+1$ nodes. For each node $v$ in this tree, the ports at $v$ are labeled with the integers $\{0,\ldots,deg(v)-1\}$ so that, for each $i \in \{0,\ldots,D-2\}$, the port numbers at both ends of the edge $\{v_i,v_{i+1}\}$ are equal. Finally we fix node labels as follows.
For each $i \in \{0,\ldots,D-1\}$, node $v_i$ has label $i(k+2)$, and each leaf adjacent to $v_i$ has label $i(k+2)+j+1$, where the port number at $v_i$ leading to it is $j$. Notice that all labels are distinct.

For each $i \in \{0,\ldots,D-1\}$, let $p_i$ be the port number at $v_i$ corresponding to the edge $\{v_i,v_{i+1}\}$. The trees in $\mathcal{T}(D,k)$ are in one-to-one correspondence with the sequences $(p_0,\ldots,p_{D-1})$ because the label of each leaf is determined by the port number (at the adjacent node $v_i$) leading to it. It follows that the number of distinct caterpillar trees in $\mathcal{T}(D,k)$ (taking into consideration the placement of the treasure) is $k^D$. 
Figure \ref{CaterpillarDiagram} gives a diagram of a caterpillar tree in $\mathcal{T}(D,k)$ and shows how nodes are labeled.

\begin{figure}[!ht]
\begin{center}
\includegraphics[scale=0.8]{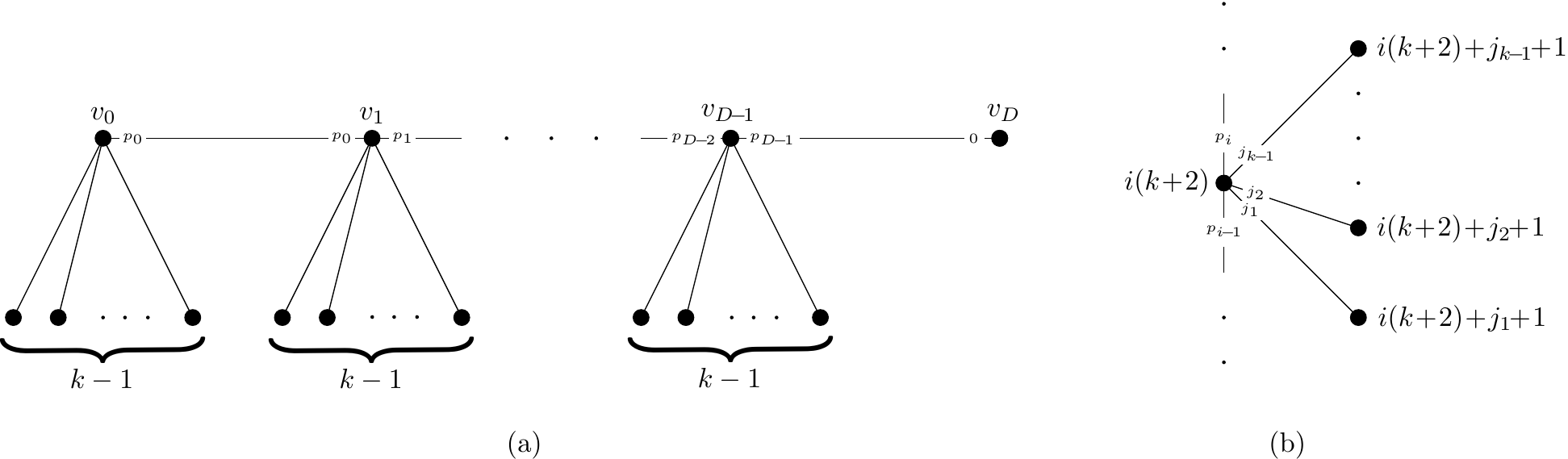}
\end{center}
\caption{(a) A caterpillar tree in $\mathcal{T}(D,k)$ with ports on path $P$ labeled. (b) The labels of the $k-1$ added leaves adjacent to $v_i$ are shown. Node $v_i$ is labeled $i(k+2)$. }
\label{CaterpillarDiagram}
\end{figure}

Consider any fixed caterpillar tree $G \in \mathcal{T}(D,k)$. We set the starting node of the agent to be $v_0$. To find the treasure, the agent must traverse the $D$ edges of path $P$. Suppose that, for some $i \in \{0,\ldots,D-1\}$, the agent is located at node $v_i$. If the agent takes port $p_i$, it will arrive at node $v_{i+1}$, and we say that this edge traversal is \emph{successful}. We may assume that the agent does not return to node $v_i$, i.e., away from the treasure, because such a move would only increase the cost of the algorithm. Further, the agent can detect when it has found the treasure and terminate immediately. 

When an agent's step is not successful (that is, when located at node $v_i$, it chooses a port other than $p_i$) it arrives at a leaf adjacent to $v_i$. In this case, we say that the agent \emph{misses}. After a miss, the agent's next step is to return to node $v_i$. Let $miss_{i,G}$ be the number of times that the agent takes a port other than $p_i$ when located at node $v_i$ in $G$. 
The \emph{cost at node $v_i$}, denoted by $cost_{i,G}$, is $2miss_{i,G} + 1$, since there are two edge traversals for each miss and one successful edge traversal. This implies the following fact.

\begin{fact}\label{cost}
For any $G \in \mathcal{T}(D,k)$, the total cost of any treasure hunt algorithm in $G$ is $\sum_{i=0}^{D-1} cost_{i,G} = D + 2\sum_{i=0}^{D-1} miss_{i,G}$.
\end{fact}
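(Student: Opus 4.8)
The plan is to account for every edge traversal of the agent by assigning it to a unique node $v_i$ of the path $P$. Fix $G \in \mathcal{T}(D,k)$ and recall that the agent starts at $v_0$. By the discussion preceding the statement, we may assume that the agent never traverses a path edge $\{v_i,v_{i+1}\}$ in the direction away from the treasure (such a move only adds cost), and that it stops as soon as it is located at $v_D$. With these conventions, every move made by the agent starting from some node $v_i$ (with $i \le D-1$) is of exactly one of three types: (a) a \emph{successful} traversal, taking port $p_i$ and arriving at $v_{i+1}$; (b) a \emph{miss}, taking a port other than $p_i$ and arriving at a leaf adjacent to $v_i$; or (c) the \emph{return} step from such a leaf back to $v_i$, which is forced since a leaf has degree $1$. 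Every move from a leaf is of type (c), and there are no other nodes in $G$, so this classification is exhaustive, and each traversal is associated with the unique node $v_i$ from which the "forward" leg of it emanates.

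Next I would argue that the successful traversal at each $v_i$, for $i \in \{0,\dots,D-1\}$, occurs exactly once. It occurs at least once: the treasure sits at $v_D$, the only simple path from $v_0$ to $v_D$ uses all $D$ edges of $P$, so the agent must at some point cross $\{v_i,v_{i+1}\}$ from $v_i$ to $v_{i+1}$. It occurs at most once: after the first successful traversal at $v_i$ the agent is at $v_{i+1}$, and by our convention it never returns to $v_i$, hence never performs another move of type (a) from $v_i$. Therefore the number of type-(a) traversals associated with $v_i$ is $1$. By definition $miss_{i,G}$ is the number of type-(b) traversals from $v_i$, and each of these is paired with exactly one type-(c) return, so the traversals associated with $v_i$ number $2\,miss_{i,G} + 1$; this is precisely $cost_{i,G}$ as defined.

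Finally, summing the disjoint contributions over all nodes of $P$ from which the agent ever departs gives that the total number of edge traversals is
\[
\sum_{i=0}^{D-1} cost_{i,G} \;=\; \sum_{i=0}^{D-1} \bigl(2\,miss_{i,G}+1\bigr) \;=\; D + 2\sum_{i=0}^{D-1} miss_{i,G},
\]
as claimed. The only point requiring any care is the exhaustiveness of the three-way classification together with the "exactly once" count for successful traversals, and both rest entirely on the two WLOG assumptions (no backward moves along $P$, immediate termination at $v_D$) recorded just before the statement; there is no real computational obstacle here.
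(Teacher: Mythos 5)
Your proposal is correct and follows essentially the same route as the paper, which establishes this fact directly from the conventions stated just before it (no backward moves along $P$, immediate termination at $v_D$) and the definitions of $miss_{i,G}$ and $cost_{i,G}=2\,miss_{i,G}+1$; you merely make the three-way classification of traversals and the ``exactly one successful traversal per $v_i$'' count explicit. No gaps.
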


We now prove a lower bound on the size of advice needed to solve treasure hunt for the class of caterpillar trees.



\begin{theorem}\label{tree-lb}
Let $D \leq C \leq e = n-1$. There exists a tree of size $\Theta(n)$, and a position of the treasure at distance $D$ from the initial position of the agent, such that treasure hunt at cost $C$ requires $\Omega(D\log \frac{e}{C})$ bits of advice.
\end{theorem}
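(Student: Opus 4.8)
The plan is to use the family of caterpillar trees $\mathcal{T}(D,k)$ already constructed, choose the parameter $k$ appropriately in terms of $e/C$, and run a counting (pigeonhole) argument over advice strings. First I would fix $k$ so that the tree has $\Theta(n)$ nodes and edges while making $\log k = \Theta(\log \frac{e}{C})$; concretely, setting $k = \Theta(e/(CD)) \cdot$ (a suitable constant) — or more carefully, choosing $k$ so that $Dk+1 = \Theta(n)$ forces $k = \Theta(n/D) = \Theta(e/D)$, and then one checks that the cost constraint $C$ together with Fact~\ref{cost} limits how much the agent can explore. The key point from Fact~\ref{cost} is that any algorithm of cost at most $C$ in a tree $G \in \mathcal{T}(D,k)$ satisfies $\sum_{i=0}^{D-1} miss_{i,G} \le \frac{C-D}{2}$, so on average the agent can afford only $O(C/D)$ misses per level $v_i$.

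The main step is the pigeonhole argument. Fix a cost bound $C$ and a purported advice assignment: an oracle is a function assigning to each tree $G \in \mathcal{T}(D,k)$ an advice string, and the agent's behaviour is determined by the advice string alone (the agent is deterministic and sees only node labels, degrees, and entry ports, all of which are identical across the family at corresponding nodes until a ``successful'' move reveals $p_i$). If the advice has size at most $b$, there are at most $2^{b+1}$ distinct advice strings, so some advice string $\sigma$ is assigned to at least $k^D / 2^{b+1}$ trees. I would then argue that an agent running with a fixed advice string $\sigma$ behaves identically on all these trees until it makes a successful move at each level, and that with a total miss-budget of $\frac{C-D}{2}$ it can only ``successfully guess'' the correct port sequence $(p_0,\dots,p_{D-1})$ for a limited number of distinct sequences. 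More precisely: with advice $\sigma$ fixed, at level $v_i$ the agent tries ports in some fixed order (depending on $\sigma$ and on which ports it already tried at levels $< i$, but since the trees agree until the first success at each level, the set of sequences consistent with ``cost $\le C$ on this tree'' is controlled); the number of sequences $(p_0,\dots,p_{D-1}) \in \{0,\dots,k-1\}^D$ on which a $\sigma$-driven agent finishes within cost $C$ is at most $\prod_{i=0}^{D-1}(m_i+1)$ where $m_i$ is the number of ports the agent is willing to try at level $i$, subject to $\sum m_i \le \frac{C-D}{2}$. By the AM–GM / concavity bound, $\prod (m_i+1)$ is maximized when the $m_i$ are equal, giving at most $\left(\frac{C-D}{2D}+1\right)^D \le \left(\frac{C}{D}\right)^D$ (up to constants). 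Hence $2^{b+1} \ge k^D / \left(\frac{C}{D}\right)^D = \left(\frac{kD}{C}\right)^D$, so $b \ge D\log\frac{kD}{C} - 1 = \Omega\!\left(D\log\frac{e}{C}\right)$ once $k = \Theta(e/D)$ is substituted, provided $e/C$ is at least a constant bounded away from $1$ (the case $e/C = O(1)$ making the bound trivially $\Omega(0)$).

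The hard part will be making the ``the agent behaves identically across trees sharing advice'' claim fully rigorous, since after a successful move at level $i$ the agent learns $p_i$ and its subsequent behaviour can branch. The clean way to handle this is to think of the agent, for a fixed advice string $\sigma$, as defining for each prefix $(p_0,\dots,p_{i-1})$ a fixed set $S_i(p_0,\dots,p_{i-1}) \subseteq \{0,\dots,k-1\}$ of ports it will try at $v_i$ before either succeeding or exhausting the budget; the tree $G_{(p_0,\dots,p_{D-1})}$ is solved within cost $C$ by the $\sigma$-agent only if $p_i \in S_i(p_0,\dots,p_{i-1})$ for all $i$ and $\sum_i |\{$ports tried before $p_i$ at level $i\}| \le \frac{C-D}{2}$. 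Counting such sequences by a level-by-level branching argument and optimizing the budget split via concavity of $\log$ gives the $\left(\frac{C}{D}\right)^D$ bound; I would present this as a short lemma. One should also double-check the regime constraints ($D \le C \le e$, $D \ge 3$) and confirm $k \ge 2$ so that $k^D$ is a meaningful count, adjusting constants so that $\mathcal{T}(D,k)$ genuinely has $\Theta(n)$ nodes. Finally, I would note this theorem immediately yields Theorem~\ref{lb} for general graphs, as already claimed in the excerpt.
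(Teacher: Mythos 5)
Your overall strategy is the same as the paper's: the same caterpillar family $\mathcal{T}(D,k)$ with $k=\Theta(n/D)$, the same pigeonhole over advice strings, the same observation that two trees sharing an advice string are indistinguishable to the agent until the first level $i$ at which their port sequences diverge (so that the map from trees to miss-vectors $(miss_{0,G},\dots,miss_{D-1,G})$ is injective on each advice class), and the same reduction to counting non-negative integer $D$-tuples with sum at most $(C-D)/2$. The one step where you deviate is the count itself, and that step as written is wrong. You claim the number of sequences a fixed-advice agent can solve within cost $C$ is at most $\prod_{i}(m_i+1)$ subject to $\sum_i m_i \le M := (C-D)/2$, and then maximize by AM--GM to get $(M/D+1)^D$. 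But the miss budget is a single \emph{global} constraint, not a per-level allocation: different port sequences in the same advice class can spend the budget at different levels. The correct count is the number of integer points in the simplex $\{\sum_i x_i \le M,\ x_i \ge 0\}$, namely $\binom{M+D}{D}$, which is \emph{at least} $((M+D)/D)^D = (M/D+1)^D$ --- so your claimed upper bound is in fact a lower bound on the true count. Concretely, for $D=2$ and $M=2$ there are six admissible miss-vectors, while your bound gives at most four.

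The damage is only a constant factor in the base of the exponential: $\binom{M+D}{D} \le \bigl(c(M+D)/D\bigr)^D$ for an absolute constant $c$ (stars and bars plus the standard binomial estimate, or the paper's volume argument giving $(6M)^D/D!$ together with Stirling). Substituting this corrected count, your chain $2^{b}\ge k^D/\binom{M+D}{D}$ still yields $b \ge D\log\bigl(kD/(c'C)\bigr) = \Omega(D\log\frac{e}{C})$ for $k=\Theta(n/D)$, under the same proviso you already state (and which the paper's own constants also require) that $e/C$ exceeds a suitable constant. Two smaller points: you should treat the degenerate case $M<1$ separately (there is then only one miss-vector per advice class, so all $k^D$ trees need distinct advice, giving $D\log k$ bits directly), and you should make the injectivity into miss-vectors explicit rather than leaving it inside the ``branching argument,'' since it is exactly what justifies bounding an advice class by the number of admissible tuples.
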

\begin{proof}
Consider any algorithm $A$ that solves treasure hunt at cost at most $C$ using $b$ bits of advice. Let $k = \lceil n/D \rceil$.

Let $S$ be a set of maximum size consisting of trees from $\mathcal{T}(D,k)$ such that, for all trees in $S$, the agent is given the same advice string. By the Pigeonhole Principle, it follows that $|S| \geq \frac{|\mathcal{T}(D,k)|}{2^{b}} = \frac{k^D}{2^{b}}$. We proceed to find an upper bound on the size of such a set $S$.

Consider any two different trees $G,G' \in \mathcal{T}(D,k)$ such that the agent is given the same advice string for both of them. Let $i$ be the smallest index such that
the port at $v_i$ leading to $v_{i+1}$ is different in $G$ and $G'$. Then the behaviour of the agent prior to visiting $v_i$ for the first time is the same in $G$ and in $G'$.
Hence, $miss_{i,G'} \neq miss_{i,G}$. 
By Fact \ref{cost}, we know that $C \geq D + 2\sum_{i=0}^{D-1} miss_{i,G}$, so $\sum_{i=0}^{D-1} miss_{i,G} \leq (C - D)/2$. 
Therefore, the number of trees in $S$ is bounded above by the number of distinct integer-valued $D$-tuples of non-negative terms whose sum is at most $(C-D)/2$. (These tuples correspond to sequences $(miss_{0,G}, \dots ,miss_{D-1,G}$).)

If $(C-D)/2 < 1$, then there is only one such $D$-tuple, i.e., the tuple with all entries equal to 0. It follows that $|S|=1$. Recall that $S$ was chosen as a set of maximum size such that, for all trees in $S$, the same advice is given to the agent. It follows that, for each tree in $\mathcal{T}(D,k)$, the agent is given a different advice string. Therefore, the number of different advice strings is $k^D$, so the size of advice is at least $\log(k^D) = D\log{k} = D\log\lceil n/D \rceil$. Since $C \geq D$, and $(C-D)/2 < 1$ implies that $C < D+2$, it follows that $D\log\lceil n/D \rceil \in \Omega(D\log\frac{e}{C})$, as required.

So, we proceed with the assumption that $(C-D)/2 \geq 1$. The following claim will be used to obtain an upper bound on the number of distinct integer-valued $D$-tuples of non-negative terms whose sum is at most $(C-D)/2$. In the sequel, $D$-tuples with integer coordinates will be called {\em integer points}.

\begin{claim}\label{tuples}
Fix any $M,D \geq 1$. Let $P$ be the set of integer-valued $D$-tuples of non-negative terms whose sum is at most $M$. Then, $|P| \leq \frac{(6M)^D}{D!}$.
\end{claim}
To prove the claim, we note that $|P|$ is the number of integer points in the simplex $X = \{(x_0,\ldots,x_{D-1}) \in \mathbb{R}^D\ |\ \sum\limits_{i=0}^{D-1} x_i \leq M \textrm{ and } 0 \leq x_i \leq M \textrm{ for all $i \in \{0,\ldots,D-1\}$}\}$. Let $X_M$ denote the simplex $\{(M+x_0,\ldots,M+x_{D-1}) \in \mathbb{R}^D\ |\ \sum\limits_{i=0}^{D-1} x_i \leq (D+1)M \textrm{ and } 0 \leq x_i \leq M \textrm{ for all $i \in \{0,\ldots,D-1\}$}\}$. Since $X_M$ is a translation of the points in $X$ by $M$ in every coordinate, it follows that $|P|$ is also the number of integer points in the simplex $X_M$. For each integer point $p$ in $X_M$, we construct a small $D$-dimensional {\em box} centered at $p$. More specifically, for each $p = (p_0,\ldots,p_{D-1}) \in X_M$ such that $p_0,\ldots,p_{D-1} \in \mathbb{Z}$, we construct $B_p = \{(p_0 + \alpha_0,\ldots,p_{D-1}+\alpha_{D-1})\ |\ -1/4 \leq \alpha_i \leq 1/4 \textrm{ for each $i \in \{0,\ldots,D-1\}$}\}$. Note that, for any two distinct integer points $p,p' \in X_M$, the boxes $B_p$ and $B_{p'}$ are disjoint. Further, the volume of each such $B_p$ is $(1/2)^D$. Finally, we wish to find an upper bound on the volume of the union of all boxes $B_p$ where $p$ is an integer point in $X_M$.  To this end, we define a simplex $Y$ (a scaled version of $X$) such that, for each integer point $p \in X_M$, the box $B_p$ is completely contained in $Y$. In particular, we define $Y = \{(y_0,\ldots,y_{D-1}) \in \mathbb{R}^D\ |\ \sum\limits_{i=0}^{D-1} y_i \leq 3M \textrm{ and } 0 \leq y_i \leq 3M \textrm{ for all $i \in \{0,\ldots,D-1\}$}\}$. It follows that $|P| \cdot (1/2)^D$ is bounded above by the volume of $Y$. From \cite{ellis}, the volume of $Y$ is equal to $\frac{(3M)^D}{D!}$, which implies that $|P| \leq \frac{(6M)^D}{D!}$. This completes the proof of the claim.


By Claim \ref{tuples} with $M=\frac{C-D}{2}$, the number of trees in $S$ is bounded above by $\frac{3^D(C-D)^D}{D!}$. Combined with our earlier lower bound on the number of trees in $S$, we have $\frac{k^D}{2^{b}} \leq |S| \leq \frac{3^D(C-D)^D}{D!}$, which implies that

\[
2^{\frac{b}{D}} \geq \frac{k \cdot \sqrt[D]{D!}}{3(C-D)} \geq \frac{k \cdot \sqrt[D]{D!}}{3C}.
\]
So,
\[
b \geq D\log \left( \frac{k \cdot \sqrt[D]{D!}}{3C} \right).
\]

By Stirling's formula we have $D ! \geq \sqrt{D}(D/e)^{D}$, for sufficiently large $D$. Hence $\sqrt[D]{D !}\geq  D^{1/(2D)} \cdot (D/e)$, where $e$ is the Euler's constant.
Since the first factor converges to 1 as $D$ grows, we have $\sqrt[D]{D !} \in \Omega(D)$. Hence, the above bound on $b$ implies $b \in \Omega\left(D\log\frac{Dk}{C}\right)$. Since $k = \lceil n/D \rceil$, it follows that $b \in \Omega\left(D\log\frac{n}{C}\right)$, so the size of advice is in $\Omega\left(D\log\frac{e}{C}\right)$, as required.
\end{proof}


\section{Conclusion}

We established upper and lower bounds on the minimum size of advice sufficient to solve the problems of rendezvous and of treasure hunt at a given cost.
For the class of trees our bounds are almost tight, up to constant factors and a summand of $O(\log\log n)$. For the class of arbitrary graphs, our bounds  leave a gap of a logarithmic factor. Closing these gaps is a natural open problem. It should be noted, however, that, even for arbitrary
graphs, our bounds are asymptotically tight whenever $D\log \frac{e}{C}$ is $\Omega(\log\log e)$ and $D$ is polynomial in the gain $\frac{e}{C}$. This is the case, for example, when we want to accomplish treasure hunt
or rendezvous at cost $\Theta(\sqrt{n})$ in an $n$-node graph. 
There are only two special situations when our gap for arbitrary graphs remains non-constant. One of them is 
if $D$ is very large with respect to the gain $\frac{e}{C}$, e.g., for an $n$-node graph with $\Theta(n^{3/2})$ edges in which the treasure is located at distance $\Theta(\sqrt{n})$  at cost $\Theta(n^{3/2}/\log n)$; our (multiplicative) gap 
is $\Theta(\log n/\log\log n)$ in this case. The other situation is when both $D$ and $\frac{e}{C}$ are very small with respect to $e$, e.g., when the treasure
in an $n$-node graph
is located at distance $D \in O(\log\log\log n)$ and we want to do treasure hunt at cost $\Theta(n/\log\log n)$. In this case we have an additive gap of 
$\Theta(\log\log n)$.    

It should also be noted that, in the context of advice, treasure hunt is not only equivalent to rendezvous of two agents,
as shown in Proposition \ref{eq}, but also to rendezvous of many agents, which is often called {\em gathering}. This task consists in gathering several agents at the same node
in the same round.
In this case, the cost should be defined as the maximum number of edge traversals per agent,
and the advice size as the maximum number of bits per agent. The reduction given by the first part of Proposition \ref{eq} should be modified
as follows. One of the agents, starting at some node $w$, is given advice string $(0)$ indicating that it should be inert. Each other agent $j$ is given the advice string
$(1\alpha_j)$, where $\alpha_j$ is the advice enabling agent $j$ to find a treasure located at $w$.

\section*{Acknowledgements}
This work was partially supported by NSERC discovery grant 8136 -- 2013 and by the Research Chair in Distributed Computing at the Universit\'e du Qu\'{e}bec en Outaouais.


\end{document}